
\documentclass[final,3p,times]{elsarticle}
\usepackage{amsfonts}
\usepackage{amssymb}
\usepackage{amsmath}
\usepackage{graphics}
\usepackage{graphicx}
\usepackage{epsfig}
\usepackage{epstopdf}
\usepackage{amsthm}
\usepackage{textcomp}
\usepackage{color}
\usepackage{float}
\usepackage{multirow}
\usepackage{subfig}
\biboptions{sort&compress}

\usepackage{algorithmic}
\usepackage{algorithm}

\linespread{1.6}

\journal{Journal of Parallel and Distributed Computing}

\begin{document}

\begin{frontmatter}

\title{Assessing the risk of advanced persistent threats}

\cortext[cor1]{Corresponding author}

\author[rvt]{Xiaofan Yang}
\ead{xfyang1964@gmail.com}

\author[rvt]{Tianrui Zhang}
\ead{363726657@qq.com}

\author[rvt,rvt2]{Lu-Xing Yang\corref{cor1}}
\ead{ylx910920@gmail.com}

\author[rvt3]{Luosheng Wen}
\ead{wls@cqu.edu.cn}

\author[rvt4]{Yuan Yan Tang}
\ead{yytang@umac.mo}

\address[rvt]{School of Software Engineering, Chongqing University, Chongqing, 400044, China}

\address[rvt2]{School of Information Technology, Deakin University, Melbourne, 3125, Australia}

\address[rvt3]{School of Mathematics and Statistics, Chongqing University, Chongqing, 400044, China}

\address[rvt4]{Department of Computer and Information Science, The University of Macau, Macau}

\begin{abstract}

As a new type of cyber attacks, advanced persistent threats (APTs) pose a severe threat to modern society. This paper focuses on the assessment of the risk of APTs. Based on a dynamic model characterizing the time evolution of the state of an organization, the organization's risk is defined as its maximum possible expected loss, and the risk assessment problem is modeled as a constrained optimization problem. The influence of different factors on an organization's risk is uncovered through theoretical analysis. Based on extensive experiments, we speculate that the attack strategy obtained by applying the hill-climbing method to the proposed optimization problem, which we call the HC strategy, always leads to the maximum possible expected loss. We then present a set of five heuristic attack strategies and, through comparative experiments, show that the HC strategy causes a higher risk than all these heuristic strategies do, which supports our conjecture. Finally, the impact of two factors on the attacker's HC cost profit is determined through computer simulations. These findings help understand the risk of APTs in a quantitative manner.

\end{abstract}

\begin{keyword}
advanced persistent threat \sep risk assessment \sep expected loss \sep attack strategy \sep constrained optimization



\end{keyword}

\end{frontmatter}



\section{Introduction}

In this day and age, the functioning of most organizations, ranging from large enterprises and financial institutions to government sectors and military branches, depends heavily on cyber networks interconnecting computer systems. However, these organizations are vulnerable to multifarious cyber attacks. Traditional cyber attacks tended to compromise lots of unspecified computer systems, with the goal of picking low hanging fruits. Conventional cyber defense measures including firewall and intrusion detection have turned out to be effective in withstanding these cyber attacks \cite{Kostopoulos2012, Singer2014}.

The cyber security landscape has changed drastically over the past few years. Many high-profile organizations have experienced a new kind of cyber attacks --- \emph{advanced persistent threats} (APTs) \cite{Virvilis2013}. Compared with traditional attacks, APTs exhibit the following distinctive characteristics: (a) The attacker is a well-resourced and well-organized group, with the goal of stealing as many sensitive data as possible from a specific organization. (b) Based on meticulous reconnaissance, a preliminary advanced social engineering attack is launched on a few target users to gain footholds in the organization's network. (c) More and more systems are infected stealthily and slowly to gain access to critical information, and preys are secretly sent to the attacker \cite{Tankard2011, Cole2013, Wrightson2015}. APTs can evade traditional detection, causing tremendous damage to organizations. In practice, the detection of APTs involves complex analysis of activities in the network of the targeted organization, which is far from mature \cite{Friedberg2015, Marchetti2016}.

Taking a risk-based approach to security has long been the recommended way to secure an organization \cite{Landoll2011, Wheeler2011, Hubbard2016}. The critical shift is that in the past it was recommended but today owing to the APT it is required. In fact, it is no exaggeration to say that everything performed in security should be mapped back to risk and justified by risk \cite{Cole2013}. Normally, we are not going to eliminate a risk, because that would be too expensive or even impossible. Instead, we are going to reduce the risk to an acceptable level, which depends on the critical information we are protecting. When it comes to an APT, the risk taken by the targeted organization translates to the organization's expected loss. When it comes to an organization, it is appropriate to take the worst-case perspective of assessing the risk as the maximum possible expected loss of the organization over all possible APT attacks. To our knowledge, there is no literature on the risk assessment of APTs.

To assess the risk taken by an organization under APTs, the time evolution of the organization's state has to be modeled accurately. Due to the propagation nature of APTs, it is appropriate to characterize the evolution process as an epidemic model \cite{Daley2009, Garetto2003, WenS2013, WenS2014a, WenS2014b, WenS2015}. In view of the persistence of APTs and taking the relevant network into account, the evolution process should be modeled as a differential dynamical system with the network topology. The individual-level dynamical modeling approach, which has been applied to a wide range of areas, ranging from epidemic spreading \cite{Ganesh2005, Draief2006, Mieghem2009, Mieghem2011} and malware spreading \cite{Draief2008, XuSH2012a, XuSH2012b, Sahneh2012, XuSH2014, YangLX2015, YangLX2017a, YangLX2017b, YangLX2017c} to rumor spreading\cite{HeZB2017, YangLX2017d}, meets this requirement. Towards this direction, a number of APT attack-defense models have recently been suggested \cite{ XuSH2015a, ZhengR2015, YangLX2017e}.

This paper addresses the risk assessment of APTs. First, a dynamic model characterizing the time evolution of the security state of an organization is established by employing the individual-level dynamic modeling approach. Then an organization's risk is quantified as its maximum possible expected loss. On this basis, the risk assessment problem boils down to a constrained optimization problem, with the expected loss as the objective function. The influence of different factors on an organization's risk is illuminated through theoretical analysis. Extensive experiments exhibit that an organization's expected loss is unimodal with respect to the attack strategy. Hence, we speculate that the APT attack strategy obtained by applying the hill-climbing method to the proposed optimization problem, which we call the HC strategy, always inflicts the maximum possible expected loss. To validate the conjecture, we formulate a set of five heuristic APT attack strategies. A set of comparative experiments clearly show that the HC strategy causes a higher risk than all the five heuristic strategies do. Hence, our conjecture is corroborated. Finally, the impact of two factors, the attack duration and the attack budget per unit time, on the attacker's HC cost profit is determined through computer simulations. The results obtained help us understand the risk of APTs in a quantitative manner.

The subsequent materials are organized in this fashion. Section 2 measures an organization's risk using its maximum expected loss, and models the risk assessment problem as an optimization problem. Section 3 reveals the influence of different factors on an organization's risk. An attack strategy is proposed in Section 4, which is shown through comparison experiments to cause the maximum expected loss. Section 5 examines the impact of two factors on the attacker's HC cost profit. This work is closed by Section 6.

\newtheorem{de}{Definition}
\newtheorem{expe}{Experiment}
\newtheorem{thm}{Theorem}
\newtheorem{lm}{Lemma}

\section{The modeling of the risk assessment problem}

Suppose some attacker, who represents a well-resourced and well-organized group, is going to conduct an APT campaign on an organization. The organization's defender, who represents the security team affiliated with the organization, faces the following urgent and challenging problem:

\emph{The risk assessment (RA) problem:} Estimate the potential loss of the organization.

This section is dedicated to the modeling of the RA problem. Our modeling process consists of six successive steps: (i) characterize the state of the organization, (ii) describe the defense posture, (iii) formulate the attack strategy, (iv) model the state evolution of the organization, (v) measure the risk of the organization, and (vi) model the RA problem.

\subsection{The state of an organization}

Consider an organization with a set of $N$ computer systems labelled $1, 2, \cdots, N$ interconnected by a network. Let $G = (V, E)$ denote the network, where each node stands for a system, i.e., $V = \{1, 2, \cdots, N\}$, and there is an edge between node $i$ and node $j$, i.e., $\{i, j\} \in E$, if and only if system $i$ can communicate directly with system $j$. Let $\mathbf{A}(G) = \left[a_{ij}\right]_{N \times N}$ denote the adjacency matrix for the network, i.e., $a_{ij} = 1$ or 0 according as $\{i, j\} \in E$ or not.

The \emph{security level} of a node is measured by the amount of the sensitive data stored in the associated system. Let $w_i$ denote the security level of node $i$. In this work, we assume $w_i = d_i$ ($1 \leq i \leq N$), where $d_i = \sum_{j=1}^Na_{ij}$ denotes the degree of node $i$ in the network. This is because a node with a higher degree typically has a higher importance.

In what follows, it is assumed that at any time, each and every node in the network is in one of two possible states: \emph{secure}, i.e, under the defender's control, and \emph{compromised}, i.e, under the attacker's control. Let $X_i(t)$ = 0 and 1 denote the event that node $i$ is secure and compromised at time $t$, respectively. The \emph{state} of the organization at time $t$ is characterized by the vector
\begin{equation}
  \mathbf{X}(t) = (X_1(t), X_2(t), \cdots, X_N(t)).
\end{equation}
Let $S_i(t)$ and $C_i(t)$ denote the probability of the event that node $i$ is secure and compromised at time $t$, respectively.
\begin{equation}
  S_i(t) = \Pr\left\{X_i(t) = 0\right\}, \quad C_i(t) = \Pr\left\{X_i(t) = 1\right\}, \quad 1 \leq i \leq N.
\end{equation}
The \emph{expected state} of the organization at time $t$ is characterized by the vector
\begin{equation}
  \mathbf{C}(t) = (C_1(t), C_2(t), \cdots, C_N(t))^T.
\end{equation}

\subsection{The cyber defense posture}

The cyber defense of an organization against APTs is twofold: \emph{prevention} and \emph{response}. The former aims to protect the secure nodes in the organization's network from compromise, while the latter is devoted to recovering the compromised nodes in the network.

The prevention investment on a node consists of three parts: the cost for purchasing a set of security products for the node, the cost for deploying and configuring the security products, and the cost for enhancing the user's awareness against advanced social engineering attacks. Let $\delta_i$ denote the prevention investment on node $i$. In this work, we assume the prevention investment on each node is linearly proportional to the security level of the node, i.e., $\delta_i = \delta \times w_i$, where the positive constant $\delta$ is referred to as the \emph{prevention coefficient}.

The response investment on a node consists of four parts: the cost for monitoring and analyzing the activities related to the node, the cost for deciding on whether the node is compromised or not, the cost for isolating the node from the network when it is compromised, and the cost for recovering the compromised node. Let $\gamma_i$ denote the response investment on node $i$. In this work, we assume the response investment on each node is linearly proportional to the security level of the node, i.e., $\gamma_i = \gamma \times w_i$, where the positive constant $\gamma$ is referred to as the \emph{response coefficient}.

\subsection{The cyber attack strategy}

The threat of an APT campaign to the organization is twofold: \emph{external attack} and \emph{internal infection}. The former is conducted by the attacker from outside of the network, while the latter is caused by the compromised nodes within the network, both with the same goal of compromising the secure nodes in the network.

Let $B$ denote the budget per unit time for attacking the organization. In this work, we assume $B$ is a constant, which is determined by the attacker prior to the campaign.

Let $x_i$ denote the cost per unit time used for attacking node $i$ when it is secure. In this work, we assume $x_i$ is a constant, which is determined by the attacker prior to the campaign. The \emph{attack strategy} is characterized by the vector
\begin{equation}
  \mathbf{x} = (x_1, x_2, \cdots, x_N).
\end{equation}
Let $||\mathbf{x}||_1$ denote the 1-norm of $\mathbf{x}$, i.e., $||\mathbf{x}||_1 = \sum_{i = 1}^N x_i$. Then, $||\mathbf{x}||_1 = B$. Let $\Omega_B$ denote the admissible set of attack strategies,
\begin{equation}
  \Omega_B = \left\{\mathbf{u} \in \mathbb{R}_+^N : ||\mathbf{u}||_1 = B \right\}.
\end{equation}
Then, we have $\mathbf{x} \in \Omega_B$.

\subsection{A state evolution model of an organization}

For fundamental knowledge on differential dynamical systems, see Ref. \cite{Khalil2002}.

Suppose an APT campaign on an organization starts at time $t = 0$ and terminates at time $t = T$. To model the state evolution of the organization, let us impose a set of hypotheses as follows.

\begin{enumerate}
	
\item [(H$_1$)] Due to external attack and prevention, at any time a secure node $i$ gets compromised at rate $\frac{\alpha x_i}{\delta w_i}$, where the positive constant $\alpha$ is referred to as the \emph{attack coefficient}; which is proportional to the quality of the reconnaissance. This hypothesis is rational, because the rate is (a) proportional to the attack cost per unit time, and (b) inversely proportional to the prevention investment.
	
\item[(H$_2$)] Due to internal infection and prevention, at any time a secure node $i$ gets compromised at the average rate $\frac{\beta\sum_{j=1}^{N}a_{ji}C_j(t)}{\delta w_i}$, where the positive constant $\beta$ is referred to as the \emph{infection coefficient}, which is typically small. Indeed, this coefficient is controllable by the attacker so as to avoid detection. This hypothesis is rational, because the average rate is (a) proportional to the probability of the event that each specific neighboring node is compromised, and (b) inversely proportional to the prevention investment.
	
\item [(H$_3$)] Due to response, at any time a compromised node $i$ gets secure at rate $\gamma w_i$. This hypothesis is rational, because the rate is proportional to the response investment.

This set of hypotheses is schematically shown in Fig. 1.
	
\end{enumerate}

\begin{figure}[H]
	\centering
	\includegraphics[width=0.5\textwidth]{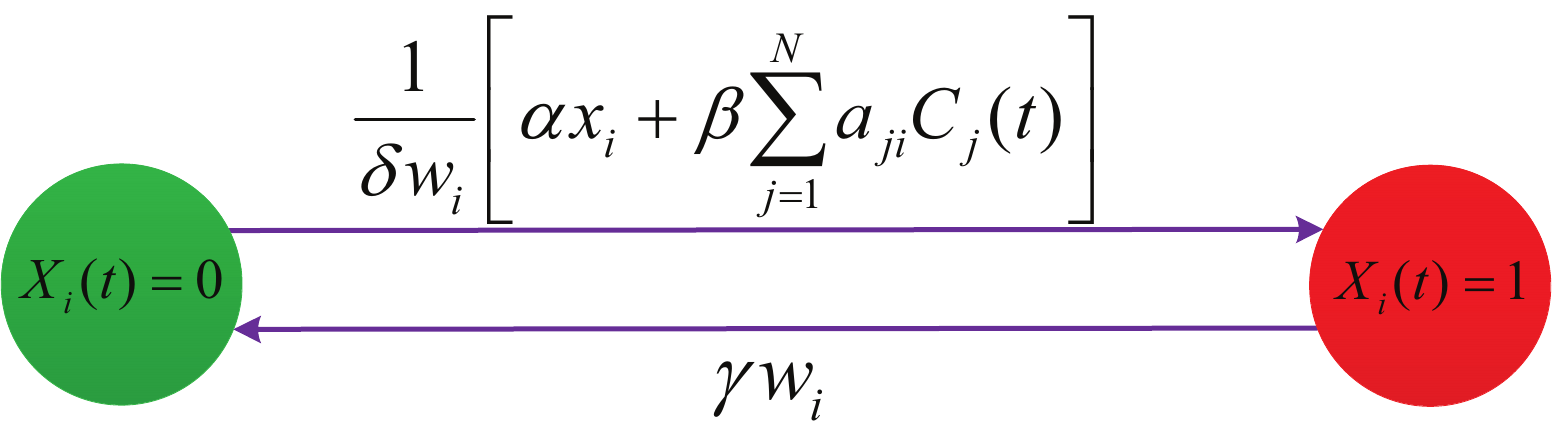}
	\caption{Diagram of hypotheses (H$_1$)-(H$_3$).}
\end{figure}

Based on the above hypotheses, the evolution of the expected state of the organization is modeled as the following differential dynamical system:
\begin{equation}
\frac{dC_i(t)}{dt}= \frac{1}{\delta w_i}\left[\alpha x_i + \beta \sum_{j=1}^{N}a_{ji}C_j(t)\right][1 - C_i(t)] -  \gamma w_iC_i(t), \quad 0 \leq t \leq T, i = 1, \cdots, N.
\end{equation}
We refer to the model as the \emph{Secure-Compromised-Secure} (SCS) model. A SCS model is characterized by the 7-tuple $M_{SCS} = (G, \alpha, \beta, \delta, \gamma, T, \mathbf{x})$.

\subsection{The modeling of the risk assessment problem}

For simplicity, we assume (a) the loss per unit time of an organization owing to a compromised node $i$ is $w_i$, and (b) the profit per unit time of the attacker owing to the compromised node is also $w_i$. This assumption is rational, because the loss and profit both are proportional to the security level of the node.

Consider a SCS model $M_{SCS} = (G, \alpha, \beta, \delta, \gamma, T, \mathbf{x})$. The \emph{expected loss} of the organization caused by implementing the attack strategy $\mathbf{x}$ is
\begin{equation}
  L(\mathbf{x}; G, \alpha, \beta, \delta, \gamma, T) = \int_0^T \sum_{i=1}^Nw_iC_i(t)dt.
\end{equation}

In what follows, we define the risk of an organization as the maximum possible expected loss of the organization over all admissible attack strategies. Let $R(G, \alpha, \beta, \delta, \gamma, T, B)$ denote the risk of the organization,
\begin{equation}
  R(G, \alpha, \beta, \delta, \gamma, T, B) = \max_{\mathbf{x} \in\Omega_B}L(\mathbf{x}; G, \alpha, \beta, \delta, \gamma, T) = \max_{\mathbf{x} \in\Omega_B}\int_0^T \sum_{i=1}^Nw_iC_i(t)dt.
\end{equation}
Obviously, the risk of an organization is dependent upon not only the security posture, $(G, \delta, \gamma)$, but the attack mechanism, $(\alpha, \beta, T, B)$.

Therefore, the original risk assessment problem is modeled as the following constrained optimization problem:
\begin{equation}
\begin{split}
  &\max_{\mathbf{x} \in\Omega_B} L(\mathbf{x}; G, \alpha, \beta, \delta, \gamma, T) = \int_0^T \sum_{i=1}^Nw_iC_i(t)dt, \\
  &\text{s.t.} \quad \frac{dC_i(t)}{dt} = \frac{1}{\delta w_i}\left[\alpha x_i + \beta \sum_{j=1}^{N}a_{ji}C_j(t)\right][1 - C_i(t)] -  \gamma w_iC_i(t), \quad 0 \leq t \leq T,  i = 1, \cdots, N, \\
  & \quad\quad\quad C_i(0) = C_i^*, \quad i = 1, \cdots, N.
 \end{split}
\end{equation}
We refer to the optimization problem as the \emph{risk assessment (RA) model}. A RA model is characterized by the 7-tuple $M_{RA} = (G, \alpha, \beta, \delta, \gamma, T, B)$.

Let $\hat{\mathbf{x}} = (x_1, x_2, \cdots, x_{N-1})$, and let
\begin{equation}
  \hat{\Omega}_B = \left\{\hat{\mathbf{u}} = (u_1, u_2, \cdots, u_{N-1}) \in \mathbb{R}_+^{N-1} : \sum_{i=1}^{N-1} u_i \leq B \right\}.
\end{equation}
Then the RA model (9) can be written in reduced form as follows:\\
\begin{equation}
\begin{split}
&\max_{\hat{\mathbf{x}} \in \hat{\Omega}_B} \hat{L}(\hat{\mathbf{x}}; G, \alpha, \beta, \delta, \gamma, T) = \int_0^T \sum_{i=1}^Nw_iC_i(t)dt, \\
&\text{s.t.} \quad \frac{dC_i(t)}{dt} = \frac{1}{\delta w_i}\left[\alpha x_i + \beta \sum_{j=1}^{N}a_{ji}C_j(t)\right][1 - C_i(t)] -  \gamma w_iC_i(t), \quad 0 \leq t \leq T,  i = 1, \cdots, N - 1, \\
& \quad\quad \frac{dC_N(t)}{dt} = \frac{1}{\delta w_N}\left[\alpha \left(B - \sum_{i = 1}^{N-1}x_i\right) + \beta \sum_{j=1}^{N}a_{jN}C_j(t)\right][1 - C_N(t)] -  \gamma w_NC_N(t), \quad 0 \leq t \leq T, \\
& \quad\quad\quad C_i(0) = C_i^*, \quad i = 1, \cdots, N.
\end{split}
\end{equation}
We refer to the optimization problem as the \emph{reduced risk assessment (RRA) model}. An RRA model is also characterized by the 7-tuple $M_{RRA} = (G, \alpha, \beta, \delta, \gamma, T, B)$. Obviously, we have
\begin{equation}
 R(G, \alpha, \beta, \delta, \gamma, T, B) = \max_{\hat{\mathbf{x}} \in\hat{\Omega}_B} L(\hat{\mathbf{x}}; G, \alpha, \beta, \delta, \gamma, T).
\end{equation}

The RRA model will be used in Section 4.

\section{The influence of different factors on the risk of an organization}

Eq. (8) tells us that the risk of an organization is dependent upon the topology of the network, the fourth coefficients, the attack duration, and the attack budget per unit time. This section is committed to examining the way that these factors affect the organization's risk. For this purpose, the following lemma is needed.

\begin{lm} (Chaplygin Lemma, see Theorem 31.4 in \cite{Szarski1965}) Consider a smooth $n$-dimensional system of differential equations
\begin{equation}
  \frac{d\mathbf{x}(t)}{dt} = \mathbf{f}(\mathbf(\mathbf{x}(t)), \quad t \geq 0,
\end{equation}
and consider the following two systems of differential inequalities:
\begin{equation}
  \frac{d\mathbf{y}(t)}{dt} \leq \mathbf{f}(\mathbf(\mathbf{y}(t)), \quad t \geq 0,
\end{equation}
and
\begin{equation}
\frac{d\mathbf{z}(t)}{dt} \geq \mathbf{f}(\mathbf(\mathbf{z}(t)), \quad t \geq 0,
\end{equation}
where $\mathbf{x}(0) = \mathbf{y}(0) = \mathbf{z}(0)$. Suppose that for any $a_1, \cdots, a_n \geq 0$, there hold
\begin{equation}
  f_i(x_1+a_1, \cdots, x_{i-1}+a_{i-1}, x_i, x_{i+1} + a_{i+1}, \cdots, x_n + a_n) \geq f_i(x_1, \cdots, x_n), \quad i = 1, \cdots, n.
\end{equation}
Then, $\mathbf{y}(t) \leq \mathbf{x}(t)$ and $\mathbf{z}(t) \geq \mathbf{x}(t)$ for all $t \geq 0$.
\end{lm}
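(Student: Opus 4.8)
The plan is to derive both conclusions from a single comparison statement. First I would reduce to the \emph{super-solution half}: it suffices to show that whenever $\mathbf{f}$ satisfies (16), $\frac{d\mathbf{z}(t)}{dt} \ge \mathbf{f}(\mathbf{z}(t))$ for $t \ge 0$, and $\mathbf{z}(0) = \mathbf{x}(0)$, then $\mathbf{z}(t) \ge \mathbf{x}(t)$ for all $t \ge 0$. Granting this, the \emph{sub-solution half} $\mathbf{y}(t) \le \mathbf{x}(t)$ follows by applying it to $\tilde{\mathbf{z}}(t) := -\mathbf{y}(t)$, $\tilde{\mathbf{x}}(t) := -\mathbf{x}(t)$ and the vector field $\tilde{\mathbf{f}}(\mathbf{u}) := -\mathbf{f}(-\mathbf{u})$: inequality (14) turns into $\frac{d\tilde{\mathbf{z}}}{dt} \ge \tilde{\mathbf{f}}(\tilde{\mathbf{z}})$, $\tilde{\mathbf{x}}$ solves (13) with $\tilde{\mathbf{f}}$ in place of $\mathbf{f}$ and $\tilde{\mathbf{x}}(0) = \tilde{\mathbf{z}}(0)$, and a short check shows $\tilde{\mathbf{f}}$ again satisfies (16) (for smooth $\mathbf{f}$ this just says the off-diagonal partial derivatives keep their sign under $\mathbf{u} \mapsto -\mathbf{f}(-\mathbf{u})$).

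Second, I would prove the super-solution statement under the strengthened hypothesis that $\mathbf{z}$ is a \emph{strict} super-solution, meaning that for some constant $\eta > 0$ one has $\frac{dz_i(t)}{dt} \ge f_i(\mathbf{z}(t)) + \eta$ for all $i$ and all $t$. Put $\mathbf{w} := \mathbf{z} - \mathbf{x}$, so $\mathbf{w}(0) = \mathbf{0}$ and $\frac{dw_i(t)}{dt} \ge f_i(\mathbf{z}(t)) - f_i(\mathbf{x}(t)) + \eta$. At $t = 0$ the right-hand side equals $\eta > 0$, so $\mathbf{w}(t) > \mathbf{0}$ componentwise on some interval $(0, t_0]$. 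Suppose, for contradiction, that $\mathbf{w}$ loses positivity later, and let $\tau > t_0$ be the first time some component of $\mathbf{w}$ equals $0$; by continuity $\mathbf{w}(\tau) \ge \mathbf{0}$, and we fix an index $i$ with $w_i(\tau) = 0$. Writing $\mathbf{z}(\tau) = \mathbf{x}(\tau) + \mathbf{a}$ with $a_j = w_j(\tau) \ge 0$ and $a_i = 0$, hypothesis (16) gives $f_i(\mathbf{z}(\tau)) \ge f_i(\mathbf{x}(\tau))$, so the value of $\frac{dw_i}{dt}$ at $\tau$ is at least $\eta > 0$. On the other hand $w_i$ reaches the value $0$ at $\tau$ from strictly positive values, so that derivative is $\le 0$ --- a contradiction. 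Hence $\mathbf{z}(t) > \mathbf{x}(t)$ for all $t > 0$.

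Third, I would remove the strictness by regularization. Given only $\frac{d\mathbf{z}}{dt} \ge \mathbf{f}(\mathbf{z})$, for each $\varepsilon > 0$ let $\mathbf{x}^{\varepsilon}$ solve $\frac{d\mathbf{x}^{\varepsilon}}{dt} = \mathbf{f}(\mathbf{x}^{\varepsilon}) - \varepsilon\mathbf{1}$ with $\mathbf{x}^{\varepsilon}(0) = \mathbf{x}(0)$; such a solution exists since $\mathbf{f}$ is smooth. The field $\mathbf{f}(\cdot) - \varepsilon\mathbf{1}$ still satisfies (16), and $\frac{d\mathbf{z}}{dt} \ge \mathbf{f}(\mathbf{z}) = [\mathbf{f}(\mathbf{z}) - \varepsilon\mathbf{1}] + \varepsilon\mathbf{1}$ exhibits $\mathbf{z}$ as a strict super-solution of the regularized system with $\eta = \varepsilon$, so the second step yields $\mathbf{z}(t) \ge \mathbf{x}^{\varepsilon}(t)$ for all $t \ge 0$. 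Letting $\varepsilon \to 0^{+}$ and using continuous dependence of solutions on parameters (a Gronwall estimate, valid because the smooth field $\mathbf{f}$ is locally Lipschitz), we obtain $\mathbf{x}^{\varepsilon}(t) \to \mathbf{x}(t)$ uniformly on bounded time intervals, hence $\mathbf{z}(t) \ge \mathbf{x}(t)$ for all $t \ge 0$, which completes the argument.

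The hard part will be the strict-case argument, specifically the handling of $t = 0$ (where $\mathbf{w}(0) = \mathbf{0}$ only, so strictness is needed to push the strict inequality off the initial instant) and of the first violation time $\tau$ (where several components of $\mathbf{w}$ might vanish at once, and one must control the sign of $\frac{dw_i}{dt}$ precisely for a component that does vanish there). That sign control is exactly what the off-diagonal monotonicity (16) provides, the $i$-th argument being left fixed because $w_i(\tau) = 0$. In contrast, the reduction in the first step and the limit in the last step are routine once the smoothness of $\mathbf{f}$ is used.
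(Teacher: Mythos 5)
The paper does not prove this lemma at all: it is imported verbatim by citation to Theorem 31.4 of Szarski's \emph{Differential Inequalities}, so there is no in-paper argument to compare against. Your proposal is a correct, self-contained proof, and it is essentially the classical Kamke--M\"uller/Chaplygin argument that the cited theorem itself rests on: (i) reduce the sub-solution half to the super-solution half via the reflection $\tilde{\mathbf{f}}(\mathbf{u}) = -\mathbf{f}(-\mathbf{u})$, which you correctly check preserves the quasimonotonicity condition (16); (ii) prove the strict case by a first-contact-time argument, where the key point --- that at the contact time $\tau$ the vanishing component $w_i(\tau)=0$ lets you apply (16) with the $i$-th argument unperturbed to force $\frac{dw_i}{dt}(\tau) \geq \eta > 0$ against the nonpositive left-hand derivative --- is exactly right; and (iii) remove strictness by the $\varepsilon\mathbf{1}$ perturbation plus continuous dependence. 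Two small points you should make explicit if you write this up: the differential inequalities presuppose that $\mathbf{y}$ and $\mathbf{z}$ are differentiable (otherwise the left-derivative comparison at $\tau$ needs Dini derivatives), and the perturbed solution $\mathbf{x}^{\varepsilon}$ must exist on the whole time interval under consideration --- harmless here, since in the paper's application the dynamics are confined to the compact set $[0,1]^N$ on $[0,T]$, but worth a sentence in general since the lemma is stated for all $t \geq 0$.
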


\subsection{The influence of the network topology}

The following theorem discloses the influence of the network topology on the risk of an organization.

\begin{thm}
The risk of an organization increases with the addition of new edges to the network.
\end{thm}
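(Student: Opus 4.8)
The plan is to prove Theorem 1 by a cooperative (monotone) comparison argument resting on the Chaplygin Lemma (Lemma 1). First I would reduce to the essential case: since any finite set of new edges can be added one at a time, it suffices to prove that adjoining a single edge $\{p,q\}\notin E$ does not decrease the risk; and, by the definition of the risk in (8), it is enough to show that for \emph{every} admissible strategy $\mathbf{x}\in\Omega_B$ one has $L(\mathbf{x};G,\alpha,\beta,\delta,\gamma,T)\le L(\mathbf{x};G',\alpha,\beta,\delta,\gamma,T)$, where $G'=(V,E\cup\{\{p,q\}\})$, for then maximizing over $\mathbf{x}\in\Omega_B$ gives $R(G,\dots)\le R(G',\dots)$. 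So I would fix $\mathbf{x}\in\Omega_B$ and let $\mathbf{C}(t)$ and $\mathbf{C}'(t)$ be the solutions of the SCS model on $G$ and on $G'$ under this same $\mathbf{x}$, with the same weights $w_i$ and the same initial state $C_i(0)=C_i'(0)=C_i^*$.

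Next I would record the invariance fact that the unit cube $[0,1]^N$ is positively invariant for the SCS dynamics: on the face $\{C_i=0\}$ the $i$-th component of the right-hand side of the SCS equation is $\frac{1}{\delta w_i}\bigl[\alpha x_i+\beta\sum_j a_{ji}C_j\bigr]\ge 0$, and on the face $\{C_i=1\}$ it is $-\gamma w_i\le 0$; hence $C_i(t),C_i'(t)\in[0,1]$ throughout $[0,T]$. This will let me apply Lemma 1 on the region actually visited by the trajectories.

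The heart of the proof is the comparison. Let $\mathbf{f}$ denote the right-hand side of the SCS system for $G'$, so $\dot{\mathbf{C}}'=\mathbf{f}(\mathbf{C}')$. I would check two things. (i) On $[0,1]^N$ the field $\mathbf{f}$ satisfies the monotonicity hypothesis of Lemma 1, since for $j\ne i$ one has $\partial f_i/\partial C_j=\frac{\beta a'_{ji}}{\delta w_i}(1-C_i)\ge 0$. (ii) $\mathbf{C}$ is a subsolution of $\dot{\mathbf{C}}'=\mathbf{f}(\mathbf{C}')$: subtracting from $f_i(\mathbf{C})$ the $i$-th component of the SCS right-hand side for $G$ evaluated at $\mathbf{C}$ leaves $\frac{\beta}{\delta w_i}\bigl(\sum_j (a'_{ji}-a_{ji})C_j\bigr)(1-C_i)$, which equals $\frac{\beta}{\delta w_p}C_q(1-C_p)\ge 0$ if $i=p$, equals $\frac{\beta}{\delta w_q}C_p(1-C_q)\ge 0$ if $i=q$, and equals $0$ otherwise; hence $\dot{\mathbf{C}}\le\mathbf{f}(\mathbf{C})$ componentwise on $[0,T]$. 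Applying the Chaplygin Lemma with $\mathbf{C}'$ in the role of the exact solution and $\mathbf{C}$ in the role of the subsolution then gives $\mathbf{C}(t)\le\mathbf{C}'(t)$ for all $t\in[0,T]$. Since each $w_i>0$, integrating $\sum_i w_iC_i(t)\le\sum_i w_iC_i'(t)$ over $[0,T]$ yields $L(\mathbf{x};G,\dots)\le L(\mathbf{x};G',\dots)$, and maximizing over $\mathbf{x}\in\Omega_B$ completes the argument.

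Conceptually the only point requiring insight is that inserting an edge merely adds a nonnegative, cooperative infection term in the two coordinates $p$ and $q$ and changes nothing else, so the dynamics on the larger network dominates that on the smaller one componentwise. The remaining work is routine: establishing the positive invariance of $[0,1]^N$ (this is what forces $1-C_i\ge 0$, hence the off-diagonal partials of $\mathbf{f}$ nonnegative, along the trajectories) and, if one wishes to apply Lemma 1 exactly as stated, observing that $\mathbf{f}$ may be altered outside the invariant cube so as to be quasimonotone everywhere without affecting either trajectory. I expect the verification of the hypotheses of the Chaplygin Lemma to be the only mildly delicate step.
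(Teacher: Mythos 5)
Your proposal is correct and follows essentially the same route as the paper: a componentwise comparison of the two SCS trajectories via the Chaplygin Lemma, followed by integration against the weights $w_i$ and maximization over $\Omega_B$ (the paper compares against an arbitrary spanning subgraph directly rather than adding one edge at a time, and casts the larger-graph solution as a supersolution of the smaller-graph system rather than the reverse, but these are cosmetic differences). Your explicit verification of the positive invariance of $[0,1]^N$ and of the quasimonotonicity hypothesis is a point the paper leaves implicit, and it is a welcome addition rather than a deviation.
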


\begin{proof}
Consider a pair of RA models, $M_{RA}^{(1)} = (G_1, \alpha, \beta, \delta, \gamma, T, B)$ and $M_{RA}^{(2)} = (G_2, \alpha, \beta, \delta, \gamma, T, B)$, where $G_1$ is a spanning subgraph of $G_2$, i.e., $G_1 = (V, E_1)$, $G_2 = (V, E_2)$, $E_1 \subseteq E_2$. Let $\mathbf{A}(G_1) = \left[a_{ij}^{(1)}\right]_{N \times N}$, $\mathbf{A}(G_2) = \left[a_{ij}^{(2)}\right]_{N \times N}$. Then $a_{ij}^{(1)} \leq a_{ij}^{(2)}$, $1 \leq i, j \leq N$. Let $\left(C_1^{(1)}(t), \cdots, C_N^{(1)}(t)\right)$ be the solution to the SCS model $M_{SCS}^{(1)} = (G_1, \alpha, \beta, \delta, \gamma, T, \mathbf{x})$ with a given initial condition, $\left(C_1^{(2)}(t), \cdots, C_N^{(2)}(t)\right)$ the solution to the SCS model $M_{SCS}^{(2)} = (G_2, \alpha, \beta, \delta, \gamma, T, \mathbf{x})$ with the same initial condition. Then,
\begin{equation}
\frac{dC_i^{(1)}(t)}{dt}= \frac{1}{\delta w_i}\left[\alpha x_i + \beta \sum_{j=1}^{N}a_{ji}^{(1)}C_j^{(1)}(t)\right]\left[1 - C_i^{(1)}(t)\right] -  \gamma w_iC_i^{(1)}(t), \quad 0 \leq t \leq T, i = 1, \cdots, N.
\end{equation}
\begin{equation}
\begin{split}
  \frac{dC_i^{(2)}(t)}{dt} &= \frac{1}{\delta w_i}\left[\alpha x_i + \beta \sum_{j=1}^{N}a_{ji}^{(2)}C_j^{(2)}(t)\right]\left[1 - C_i^{(2)}(t)\right] -  \gamma w_iC_i^{(2)}(t) \\
  &\geq \frac{1}{\delta w_i}\left[\alpha x_i + \beta \sum_{j=1}^{N}a_{ji}^{(1)}C_j^{(2)}(t)\right]\left[1 - C_i^{(2)}(t)\right] -  \gamma w_iC_i^{(2)}(t), \quad 0 \leq t \leq T, i = 1, \cdots, N.
\end{split}
\end{equation}
It follows from Lemma 1 that $C_i^{(1)}(t) \leq C_i^{(2)}(t)$, $0 \leq t \leq T$, $i = 1, 2, \cdots, N$. So,
\begin{equation}
  L(\mathbf{x}; G_1, \alpha, \beta, \delta, \gamma, T) = \int_0^T\sum_{i=1}^Nw_iC_i^{(1)}(t)dt \leq \int_0^T\sum_{i=1}^Nw_iC_i^{(2)}(t)dt = L(\mathbf{x}; G_2, \alpha, \beta, \delta, \gamma, T).
\end{equation}
Hence,
\begin{equation}
\begin{split}
  R(G_1, \alpha, \beta, \delta, \gamma, T, B) &= \max_{\mathbf{x} \in \Omega_B}L(\mathbf{x}; G_1, \alpha, \beta, \delta, \gamma, T)
  \leq \max_{\mathbf{x} \in \Omega_B}L(\mathbf{x}; G_2, \alpha, \beta, \delta, \gamma, T) \\
  &= R(G_2, \alpha, \beta, \delta, \gamma, T, B).
\end{split}
\end{equation}
The proof is complete.
\end{proof}

This theorem implies that the denser the network of an organization, the higher the risk of the organization will be. So, busy business is always accompanied with high risk.

\subsection{The influence of the four coefficients}

The following theorem illuminates the way that the four coefficient in the RA model affects the risk of an organization.

\begin{thm}

The risk of an organization ascends with the attack and infection coefficients, and descends with the prevention and response coefficients.

\end{thm}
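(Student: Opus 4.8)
\emph{Proof proposal.} The plan is to reproduce, four times over, the comparison argument used for Theorem 1. For each of the four coefficients in turn, I would fix two RA models that agree in every entry except that one coefficient, fix a common admissible attack strategy $\mathbf{x} \in \Omega_B$ and a common initial condition, and use the Chaplygin Lemma (Lemma 1) to order the two trajectories of the associated SCS models; integrating and then taking $\max_{\mathbf{x} \in \Omega_B}$ transfers the ordering from the expected losses to the risk.

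First I would record two preliminary facts. (a) The cube $[0,1]^N$ is positively invariant for any SCS model: on the face $C_i = 0$ the right-hand side of (6) equals $\frac{1}{\delta w_i}[\alpha x_i + \beta\sum_j a_{ji}C_j(t)] \ge 0$, and on the face $C_i = 1$ it equals $-\gamma w_i < 0$; hence every solution obeys $0 \le C_i(t) \le 1$, and in particular $1 - C_i(t) \ge 0$ along trajectories. (b) The SCS vector field $\mathbf{f}$ satisfies the quasi-monotonicity hypothesis (17) of Lemma 1, since for $j \ne i$ one has $\partial f_i/\partial C_j = \frac{\beta a_{ji}}{\delta w_i}(1 - C_i) \ge 0$ on $[0,1]^N$. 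Fact (b) is exactly what licenses each comparison below.

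Now write $\bigl(C_i^{(1)}(t)\bigr)$ and $\bigl(C_i^{(2)}(t)\bigr)$ for the solutions of the two SCS models sharing the same $\mathbf{x}$ and the same initial data, and treat the coefficients as four nearly identical sub-cases. (i) If $\alpha_1 \le \alpha_2$, then along $\bigl(C_i^{(2)}(t)\bigr)$ the model-2 right-hand side exceeds the model-1 right-hand side by $\frac{(\alpha_2-\alpha_1)x_i}{\delta w_i}\bigl(1 - C_i^{(2)}(t)\bigr) \ge 0$; taking $\mathbf{f}$ to be the model-1 field and $\mathbf{z} = \bigl(C_i^{(2)}(t)\bigr)$ in (16), Lemma 1 yields $C_i^{(1)}(t) \le C_i^{(2)}(t)$ on $[0,T]$. (ii) If $\beta_1 \le \beta_2$, the analogous excess along $\bigl(C_i^{(2)}(t)\bigr)$ is $\frac{\beta_2-\beta_1}{\delta w_i}\bigl(\sum_j a_{ji}C_j^{(2)}(t)\bigr)\bigl(1 - C_i^{(2)}(t)\bigr) \ge 0$, giving $C_i^{(1)}(t) \le C_i^{(2)}(t)$ again. (iii) If $\delta_1 \le \delta_2$, then $\frac{1}{\delta_1 w_i} \ge \frac{1}{\delta_2 w_i}$, so now the model-1 field dominates the model-2 field along $\bigl(C_i^{(1)}(t)\bigr)$ (the bracket $\alpha x_i + \beta\sum_j a_{ji}C_j(t)$ and the factor $1 - C_i(t)$ are both nonnegative); taking $\mathbf{f}$ to be the model-2 field and $\mathbf{z} = \bigl(C_i^{(1)}(t)\bigr)$, Lemma 1 yields $C_i^{(1)}(t) \ge C_i^{(2)}(t)$. (iv) If $\gamma_1 \le \gamma_2$, the model-1 field exceeds the model-2 field along $\bigl(C_i^{(1)}(t)\bigr)$ by $(\gamma_2-\gamma_1)w_i C_i^{(1)}(t) \ge 0$, so again $C_i^{(1)}(t) \ge C_i^{(2)}(t)$.

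Finally, in cases (i)--(ii) one gets $\sum_i w_i C_i^{(1)}(t) \le \sum_i w_i C_i^{(2)}(t)$ for all $t \in [0,T]$, hence $L(\mathbf{x};G,\alpha,\beta,\delta,\gamma,T)$ is nondecreasing in $\alpha$ and in $\beta$, and applying $\max_{\mathbf{x}\in\Omega_B}$ shows $R$ ascends with $\alpha$ and $\beta$. In cases (iii)--(iv) all inequalities reverse, so $L$ and therefore $R$ descends with $\delta$ and $\gamma$. I do not expect any deep obstacle here: the one point requiring care is the bookkeeping --- in each sub-case one must correctly decide which of the two vector fields is the larger and hence which trajectory plays the role of $\mathbf{z}$ (or $\mathbf{y}$) in Lemma 1, and one must keep the sign of the excess term under control via the invariance $0 \le C_i(t) \le 1$ from step (a). Everything else parallels the proof of Theorem 1 verbatim.
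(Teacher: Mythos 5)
Your proposal is correct and is exactly the argument the paper intends: the paper omits the proof of this theorem, stating only that it is analogous to that of Theorem 1, and your four sub-cases carry out that Chaplygin-Lemma comparison verbatim, with the right trajectory playing the role of $\mathbf{z}$ in each case. Your preliminary steps (a) and (b) — the positive invariance of $[0,1]^N$ and the quasi-monotonicity check licensing Lemma 1 — are details the paper leaves implicit even in Theorem 1, and including them only strengthens the writeup.
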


The proof of this theorem is analogous to that of Theorem 1 and hence is omitted. The first claim exhibits that a meticulous reconnaissance can enhance the risk of the target organization. The second claim demonstrates that a fast infection can increase the risk. The last two claims show that an increase in security investment always reduces the risk.

\subsection{The influence of the attack duration}

The following theorem reveals the influence of the attack duration on the risk of an organization.

\begin{thm}
The risk of  an organization goes up with the attack duration.
\end{thm}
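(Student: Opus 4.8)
The plan is to exploit the fact that the state equation (6) of the RA model is \emph{autonomous} (its right-hand side carries no explicit dependence on $t$) and that the integrand $\sum_{i=1}^N w_i C_i(t)$ is nonnegative; together these force the expected loss, for any fixed attack strategy, to be a nondecreasing function of the upper limit of integration.

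First I would fix two attack durations $T_1 < T_2$ and an arbitrary admissible strategy $\mathbf{x} \in \Omega_B$. Since the terminal time $T$ enters the RA model only through the interval of integration and not through the ODE itself, the solution of (6) with initial data $C_i(0) = C_i^*$ is literally the same function on $[0, T_1]$ whether the model is run up to $T_1$ or up to $T_2$; call this common solution $\mathbf{C}(t) = (C_1(t), \cdots, C_N(t))$, which exists and is unique on $[0, T_2]$ by the standard theory for smooth ODEs.

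The one genuinely technical point, and hence the main obstacle (though a mild one), is to check that the hypercube $[0,1]^N$ is positively invariant, so that in particular $C_i(t) \geq 0$ for all $t \in [0, T_2]$. This is routine: on the face $C_i = 0$ one has $\frac{dC_i}{dt} = \frac{1}{\delta w_i}\left[\alpha x_i + \beta \sum_{j=1}^N a_{ji} C_j(t)\right] \geq 0$ because $x_i \geq 0$ and $C_j(t) \geq 0$, while on the face $C_i = 1$ one has $\frac{dC_i}{dt} = -\gamma w_i \leq 0$; thus a trajectory that starts in $[0,1]^N$, as the vector of initial probabilities $(C_1^*, \cdots, C_N^*)$ does, never leaves it.

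Finally, with $w_i C_i(t) \geq 0$ on $[0, T_2]$ for every $i$, I would write
\begin{equation}
L(\mathbf{x}; G, \alpha, \beta, \delta, \gamma, T_2) - L(\mathbf{x}; G, \alpha, \beta, \delta, \gamma, T_1) = \int_{T_1}^{T_2} \sum_{i=1}^N w_i C_i(t)\, dt \geq 0,
\end{equation}
so $L(\mathbf{x}; G, \alpha, \beta, \delta, \gamma, T_1) \leq L(\mathbf{x}; G, \alpha, \beta, \delta, \gamma, T_2)$ for every $\mathbf{x} \in \Omega_B$. Taking the maximum over $\mathbf{x} \in \Omega_B$ on both sides yields $R(G, \alpha, \beta, \delta, \gamma, T_1, B) \leq R(G, \alpha, \beta, \delta, \gamma, T_2, B)$, which is the claim. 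Note that, unlike Theorems 1 and 2, this argument needs no Chaplygin-type comparison; the only care required is the nonnegativity of the state, after which monotonicity of the integral does the rest.
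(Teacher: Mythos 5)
Your proposal is correct and follows essentially the same route as the paper's own proof: the solution on $[0,T_1]$ is unchanged when the horizon is extended to $T_2$, the extra integral over $[T_1,T_2]$ is nonnegative, and taking the maximum over $\Omega_B$ gives the monotonicity of the risk. The only difference is that you explicitly verify the positive invariance of $[0,1]^N$ to justify the nonnegativity of the integrand, a step the paper leaves implicit.
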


\begin{proof}
Consider a pair of RA models, $M_{RA}^{(1)} = (G, \alpha, \beta, \delta, \gamma, T_1, B)$ and $M_{RA}^{(2)} = (G, \alpha, \beta, \delta, \gamma, T_2, B)$, where $T_1 < T_2$. Let $\left(C_1^{(1)}(t), \cdots, C_N^{(1)}(t)\right)$ be the solution to the SCS model $M_{SCS}^{(1)} = (G, \alpha, \beta, \delta, \gamma, T_1, \mathbf{x})$ with a given initial condition, $\left(C_1^{(2)}(t), \cdots, C_N^{(2)}(t)\right)$ the solution to the SCS model $M_{SCS}^{(2)} = (G, \alpha, \beta, \delta, \gamma, T_2, \mathbf{x})$ with the same initial condition. Then,
\begin{equation}
  C_i^{(1)}(t) = C_i^{(2)}(t), \quad 0 \leq t \leq T_1, i = 1, \cdots, N.
\end{equation}
So,
\begin{equation}
\begin{split}
  L(\mathbf{x}; G, \alpha, \beta, \delta, \gamma, T_1) &= \int_0^{T_1}\sum_{i=1}^Nw_iC_i^{(1)}(t)dt = \int_0^{T_1}\sum_{i=1}^Nw_iC_i^{(2)}(t)dt
  \leq \int_0^{T_2}\sum_{i=1}^Nw_iC_i^{(2)}(t)dt \\
  &= L(\mathbf{x}; G, \alpha, \beta, \delta, \gamma, T_2).
\end{split}
\end{equation}
Hence,
\begin{equation}
\begin{split}
  R(G, \alpha, \beta, \delta, \gamma, T_1, B) &= \max_{\mathbf{x} \in \Omega_{B}}L(\mathbf{x}; G, \alpha, \beta, \delta, \gamma, T_1)
  \leq \max_{\mathbf{x} \in \Omega_B}L(\mathbf{x}; G, \alpha, \beta, \delta, \gamma, T_2) \\
  &= R(G, \alpha, \beta, \delta, \gamma, T_2, B).
\end{split}
\end{equation}
The proof is complete.
\end{proof}

\subsection{The influence of the attack budget per unit time}

The following theorem demonstrates the influence of the attack budget per unit time on the risk of an organization.

\begin{thm}
The risk of an organization rises with the attack budget per unit time.
\end{thm}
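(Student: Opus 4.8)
The statement asserts monotonicity of the risk $R(G,\alpha,\beta,\delta,\gamma,T,B)$ in the budget $B$, i.e.\ if $B_1 < B_2$ then $R(G,\alpha,\beta,\delta,\gamma,T,B_1) \leq R(G,\alpha,\beta,\delta,\gamma,T,B_2)$. The natural approach mirrors the proofs of Theorems 1 and 3, but there is a twist: changing $B$ changes the feasible set $\Omega_B$ itself, not just a coefficient inside the dynamics, so I cannot directly compare the same $\mathbf{x}$ in two models. The plan is to reduce to a per-strategy comparison by exhibiting, for each $\mathbf{x}^{(1)} \in \Omega_{B_1}$, a companion strategy $\mathbf{x}^{(2)} \in \Omega_{B_2}$ that does at least as much damage, and then take maxima.

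First I would fix an arbitrary $\mathbf{x}^{(1)} = (x_1^{(1)},\dots,x_N^{(1)}) \in \Omega_{B_1}$ and define $\mathbf{x}^{(2)}$ by adding the surplus budget $B_2 - B_1 > 0$ to (say) the first coordinate, or distributing it arbitrarily among the coordinates; concretely $x_i^{(2)} = x_i^{(1)} + a_i$ with $a_i \geq 0$ and $\sum_i a_i = B_2 - B_1$, so that $\mathbf{x}^{(2)} \in \Omega_{B_2}$ and $x_i^{(2)} \geq x_i^{(1)}$ for every $i$. Let $\left(C_i^{(1)}(t)\right)$ and $\left(C_i^{(2)}(t)\right)$ be the solutions of the SCS model under $\mathbf{x}^{(1)}$ and $\mathbf{x}^{(2)}$ respectively, with the same initial condition. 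Writing the right-hand side of the SCS system as $f_i$, one has in the $\mathbf{x}^{(2)}$-system
\begin{equation}
\frac{dC_i^{(2)}(t)}{dt} = \frac{1}{\delta w_i}\left[\alpha x_i^{(2)} + \beta\sum_{j=1}^N a_{ji}C_j^{(2)}(t)\right]\left[1-C_i^{(2)}(t)\right] - \gamma w_i C_i^{(2)}(t) \geq \frac{1}{\delta w_i}\left[\alpha x_i^{(1)} + \beta\sum_{j=1}^N a_{ji}C_j^{(2)}(t)\right]\left[1-C_i^{(2)}(t)\right] - \gamma w_i C_i^{(2)}(t),
\end{equation}
since $x_i^{(2)} \geq x_i^{(1)}$ and $1 - C_i^{(2)}(t) \geq 0$ (the latter being the standard invariance $C_i(t) \in [0,1]$ for the SCS dynamics, which I would note in passing). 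Thus $\left(C_i^{(2)}(t)\right)$ is a supersolution and $\left(C_i^{(1)}(t)\right)$ an exact solution of the $\mathbf{x}^{(1)}$-system, whose vector field satisfies the off-diagonal monotonicity condition (17) of the Chaplygin Lemma because every $a_{ji} \geq 0$ and the $[1-C_i]$ factor keeps the coupling cooperative. Lemma 1 then gives $C_i^{(1)}(t) \leq C_i^{(2)}(t)$ for all $t \in [0,T]$ and all $i$.

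Integrating and using $w_i \geq 0$ yields $L(\mathbf{x}^{(1)}; G,\alpha,\beta,\delta,\gamma,T) \leq L(\mathbf{x}^{(2)}; G,\alpha,\beta,\delta,\gamma,T) \leq R(G,\alpha,\beta,\delta,\gamma,T,B_2)$. Since $\mathbf{x}^{(1)} \in \Omega_{B_1}$ was arbitrary, taking the supremum over $\Omega_{B_1}$ gives $R(G,\alpha,\beta,\delta,\gamma,T,B_1) \leq R(G,\alpha,\beta,\delta,\gamma,T,B_2)$, which is the claim. The only mildly delicate point — the main obstacle, such as it is — is the bookkeeping around the changing constraint set: one must be careful to build the augmented strategy $\mathbf{x}^{(2)}$ so that it is genuinely admissible for $B_2$ (equality $\|\mathbf{x}^{(2)}\|_1 = B_2$, nonnegative entries) while dominating $\mathbf{x}^{(1)}$ coordinatewise; everything else is a verbatim rerun of the Chaplygin-comparison argument already used for Theorems 1 and 3. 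Alternatively, and even more cleanly, one can phrase the whole thing through the RRA model (11): there the feasible set $\hat\Omega_B = \{\hat{\mathbf{u}} \in \mathbb{R}_+^{N-1} : \sum u_i \leq B\}$ is nested, $\hat\Omega_{B_1} \subseteq \hat\Omega_{B_2}$, so monotonicity of $R$ in $B$ is immediate from (13) once one observes the trivial fact that the maximum of a fixed function over a larger set is at least its maximum over a smaller set — no comparison lemma needed at all.
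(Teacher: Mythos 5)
Your main argument is correct and is essentially the paper's proof: the paper also reduces the problem to a per-strategy comparison by mapping each $\mathbf{x} \in \Omega_{B_1}$ to a coordinatewise-dominating strategy in $\Omega_{B_2}$ (it uses the scaled strategy $\frac{B_2}{B_1}\mathbf{x}$ where you add the surplus $B_2 - B_1$ additively; either choice works) and then invokes the Chaplygin comparison lemma exactly as you do before taking maxima. However, your closing claim that the result follows ``even more cleanly'' from the RRA formulation with \emph{no comparison lemma at all} is wrong: in the RRA model the objective $\hat{L}(\hat{\mathbf{x}};G,\alpha,\beta,\delta,\gamma,T)$ is not a fixed function of $\hat{\mathbf{x}}$ as $B$ varies, because $B$ enters the dynamics of node $N$ through the term $\alpha\left(B - \sum_{i=1}^{N-1}x_i\right)$. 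So enlarging $B$ both enlarges $\hat{\Omega}_B$ \emph{and} changes the function being maximized, and showing that the latter change is favorable (i.e., that increasing the attack cost on node $N$ does not decrease the loss) requires precisely the Chaplygin-type comparison you were trying to avoid. The nested-feasible-set observation alone does not close the argument.
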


\begin{proof}
	Consider a pair of RA models, $M_{RA}^{(1)} = (G, \alpha, \beta, \delta, \gamma, T, B_1)$ and $M_{RA}^{(2)} = (G, \alpha, \beta, \delta, \gamma, T, B_2)$, where $B_1 < B_2$. Let $\mathbf{A}(G) = \left[a_{ij}\right]_{N \times N}$. For any $\mathbf{x} \in \Omega_{B_1}$, we have $\frac{B_2}{B_1}\mathbf{x} \in \Omega_{B_2}$. Let $\left(C_1^{(1)}(t), \cdots, C_N^{(1)}(t)\right)$ be the solution to the SCS model $M_{SCS}^{(1)} = (G, \alpha, \beta, \delta, \gamma, T, \mathbf{x})$ with a given initial condition, $\left(C_1^{(2)}(t), \cdots, C_N^{(2)}(t)\right)$ the solution to the SCS model $M_{SCS}^{(2)} = (G, \alpha, \beta, \delta, \gamma, T, \frac{B_2}{B_1}\mathbf{x})$ with the same initial condition. Then,
	\begin{equation}
	\frac{dC_i^{(1)}(t)}{dt}= \frac{1}{\delta w_i}\left[\alpha x_i + \beta \sum_{j=1}^{N}a_{ji}C_j^{(1)}(t)\right]\left[1 - C_i^{(1)}(t)\right] -  \gamma w_iC_i^{(1)}(t), \quad 0 \leq t \leq T, i = 1, \cdots, N.
	\end{equation}
	\begin{equation}
	\begin{split}
	\frac{dC_i^{(2)}(t)}{dt} &= \frac{1}{\delta w_i}\left[\alpha\frac{B_2}{B_1} x_i + \beta \sum_{j=1}^{N}a_{ji}C_j^{(2)}(t)\right]\left[1 - C_i^{(2)}(t)\right] -  \gamma w_iC_i^{(2)}(t) \\
	&\geq \frac{1}{\delta w_i}\left[\alpha x_i + \beta \sum_{j=1}^{N}a_{ji}C_j^{(2)}(t)\right]\left[1 - C_i^{(2)}(t)\right] -  \gamma w_iC_i^{(2)}(t), \quad 0 \leq t \leq T, i = 1, \cdots, N.
	\end{split}
	\end{equation}
	It follows from Lemma 1 that $C_i^{(1)}(t) \leq C_i^{(2)}(t)$, $0 \leq t \leq T$, $i = 1, 2, \cdots, N$. Thus,
	\begin{equation}
	L(\mathbf{x}; G, \alpha, \beta, \delta, \gamma, T) = \int_0^T\sum_{i=1}^Nw_iC_i^{(1)}(t)dt \leq \int_0^T\sum_{i=1}^Nw_iC_i^{(2)}(t)dt = L\left(\frac{B_2}{B_1}\mathbf{x}; G, \alpha, \beta, \delta, \gamma, T\right).
	\end{equation}
	Hence,
	\begin{equation}
	\begin{split}
	R(G, \alpha, \beta, \delta, \gamma, T, B_1) &= \max_{\mathbf{x} \in \Omega_{B_1}}L(\mathbf{x}; G, \alpha, \beta, \delta, \gamma, T)
	\leq \max_{\mathbf{x} \in \Omega_{B_1}}L\left(\frac{B_2}{B_1}\mathbf{x}; G, \alpha, \beta, \delta, \gamma, T\right) \\
	& \leq \max_{\mathbf{x} \in \Omega_{B_2}}L(\mathbf{x}; G, \alpha, \beta, \delta, \gamma, T) = R(G, \alpha, \beta, \delta, \gamma, T, B_2).
	\end{split}
	\end{equation}
	The proof is complete.
\end{proof}

\section{An attack strategy}

The RA model characterizing the RA problem has been established in Section 2. We are now confronted with the problem of how to solve the model. As the RA model involves a higher-dimensional nonlinear objective function and a dynamic constraint, it is extremely difficult, if not impossible, to solve the model analytically. In this section, let us turn our attention to the numerical solution of the RA model.

\subsection{The HC attack strategy}

The goal of this subsection is to present a numerical method for solving the RA model. For this purpose, let us first examine the unimodality of the objective function in the RRA model through computer experiments.

\begin{expe}
	
Let $G^{(2)}$ be the connected graph with two nodes labelled 1 and 2. Fig. 2 plots the four functions $\hat{L}(\hat{\mathbf{x}}; G, \alpha, \beta, \delta, \gamma, T)$ with the following combinations of parameters:
\begin{tabbing}
\hspace{5cm} $G$ \quad\quad\= $\alpha$ \quad\quad\= $\beta$ \quad\quad\= $\delta$ \quad\quad\= $\gamma$ \quad\quad\= T \quad\quad\= B\\
\hspace{5cm} $G^{(2)}$ \> 0.5 \> 0.5 \> 0.5/1 \> 1 \> 10 \> 10\\
\hspace{5cm} $G^{(2)}$ \> 0.5/1 \> 1 \> 1 \> 1 \> 10 \> 10
\end{tabbing}
For each of these functions, the sole maximum point is marked in Fig. 4. It is seen that these functions are all unimodal.
	
\end{expe}

\begin{figure}[!t]
	\setlength{\abovecaptionskip}{0.cm}
	\setlength{\belowcaptionskip}{-0.cm}
	\centering
	\includegraphics[width=0.5\textwidth]{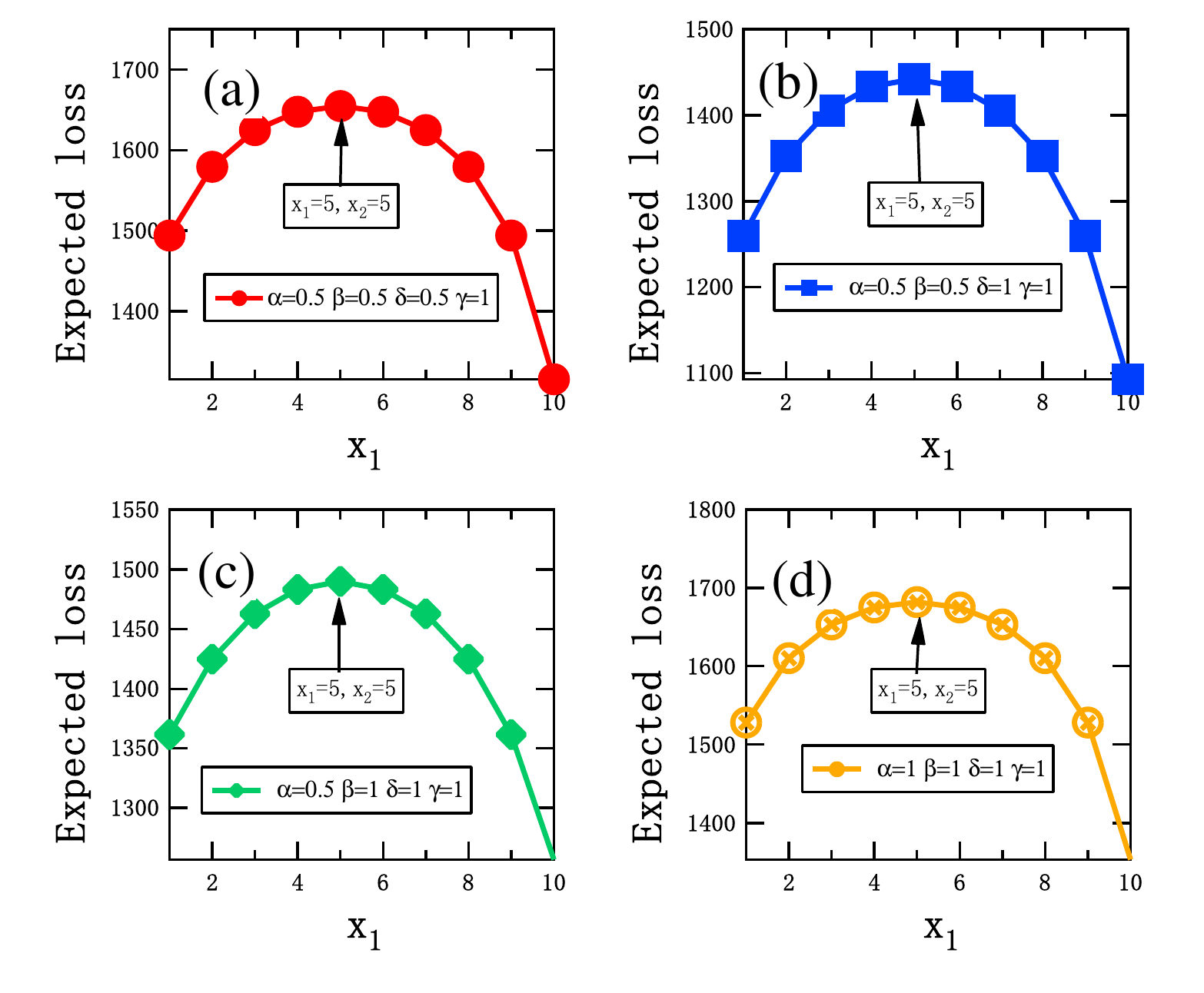}
	\caption{A graphical representation of the objective functions in Experiment 1.}
\end{figure}

\begin{expe}
Up to isomorphism, there are only two different graphs with three nodes, $G_1^{(3)}$ and $G_2^{(3)}$, which are depicted in Fig. 3. Fig. 4 plots the eight functions $\hat{L}(\hat{\mathbf{x}}; G, \alpha, \beta, \delta, \gamma, T)$ with the following combinations of parameters:
\begin{tabbing}
	\hspace{5cm} $G$ \quad\quad\= $\alpha$ \quad\quad\= $\beta$ \quad\quad\= $\delta$ \quad\quad\= $\gamma$ \quad\quad\= T \quad\quad\= B\\
	\hspace{5cm} $G_1^{(3)}$ \> 0.5 \> 0.5 \> 0.5/1 \> 1 \> 10 \> 10\\
	\hspace{5cm} $G_1^{(3)}$ \> 0.5/1 \> 1 \> 1 \> 1 \> 10 \> 10\\
	\hspace{5cm} $G_2^{(3)}$ \> 0.5 \> 0.5 \> 0.5/1 \> 1 \> 10 \> 10\\
	\hspace{5cm} $G_2^{(3)}$ \> 0.5/1 \> 1 \> 1 \> 1 \> 10 \> 10
\end{tabbing}
For each of these functions, the sole maximum point is marked in Fig. 4. It is seen that these functions are all unimodal.

\end{expe}

\begin{figure}[!t]
	\setlength{\abovecaptionskip}{0.cm}
	\setlength{\belowcaptionskip}{-0.cm}
	\centering
	\subfloat[$G_1^{(3)}$]{\includegraphics[width=0.2\textwidth]{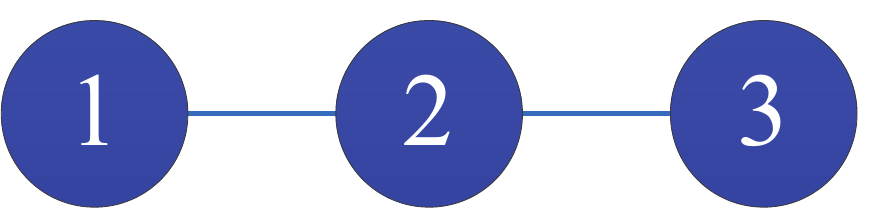}\label{a}}
	\hspace{1cm}
	\subfloat[$G_2^{(3)}$]{\includegraphics[width=0.2\textwidth]{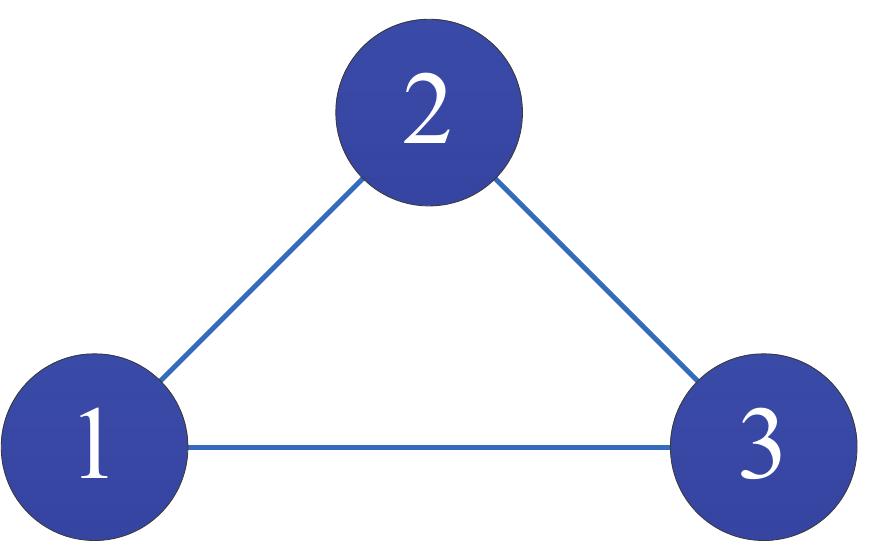}\label{a}}
	\caption{Two connected graphs with three nodes.}
\end{figure}

\begin{figure}[!t]
	\setlength{\abovecaptionskip}{0.cm}
	\setlength{\belowcaptionskip}{-0.cm}
	\centering
	\subfloat[]{\includegraphics[width=0.25\textwidth]{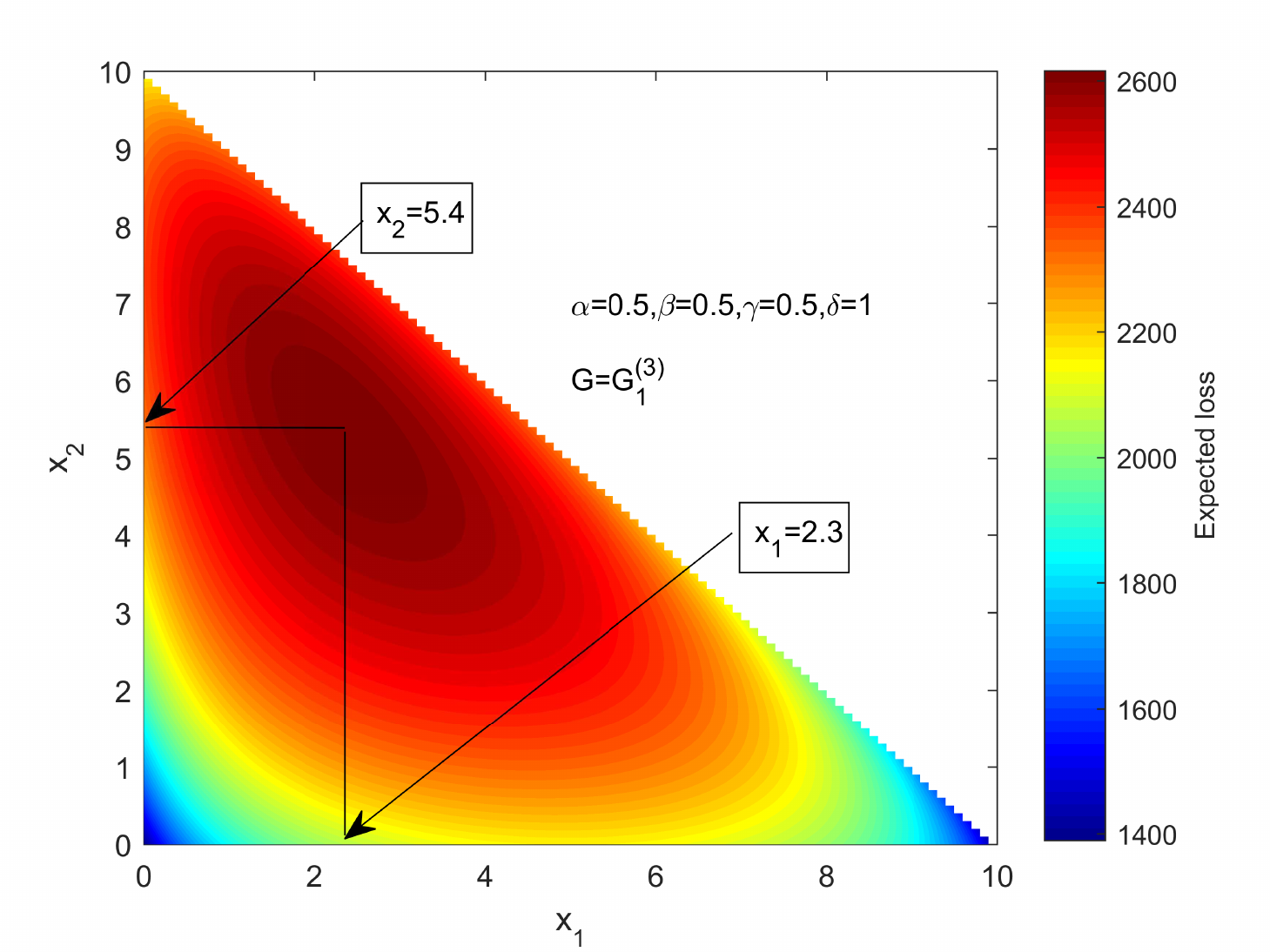}\label{a}}
	\subfloat[]{\includegraphics[width=0.25\textwidth]{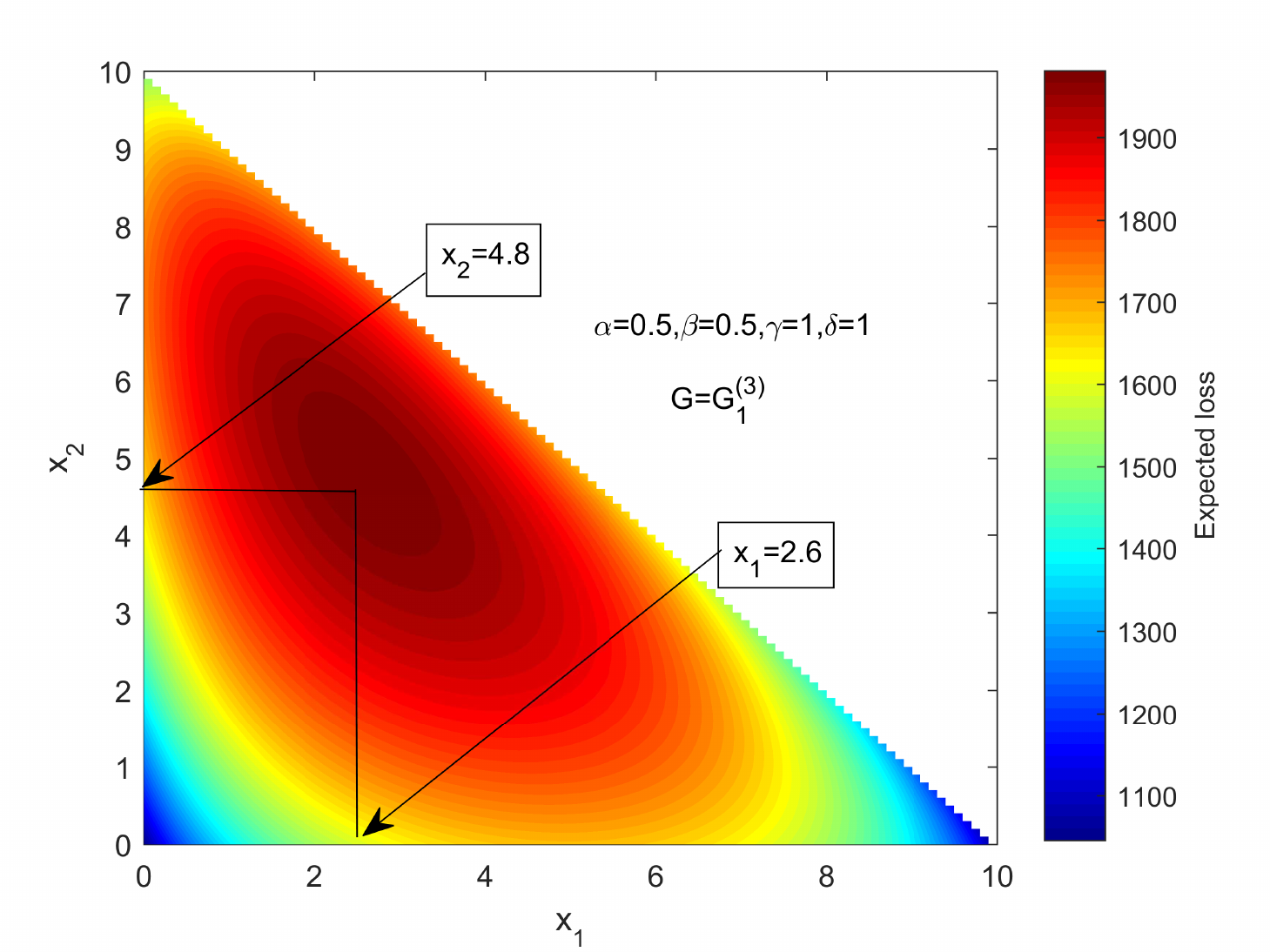}\label{a}}
	\subfloat[]{\includegraphics[width=0.25\textwidth]{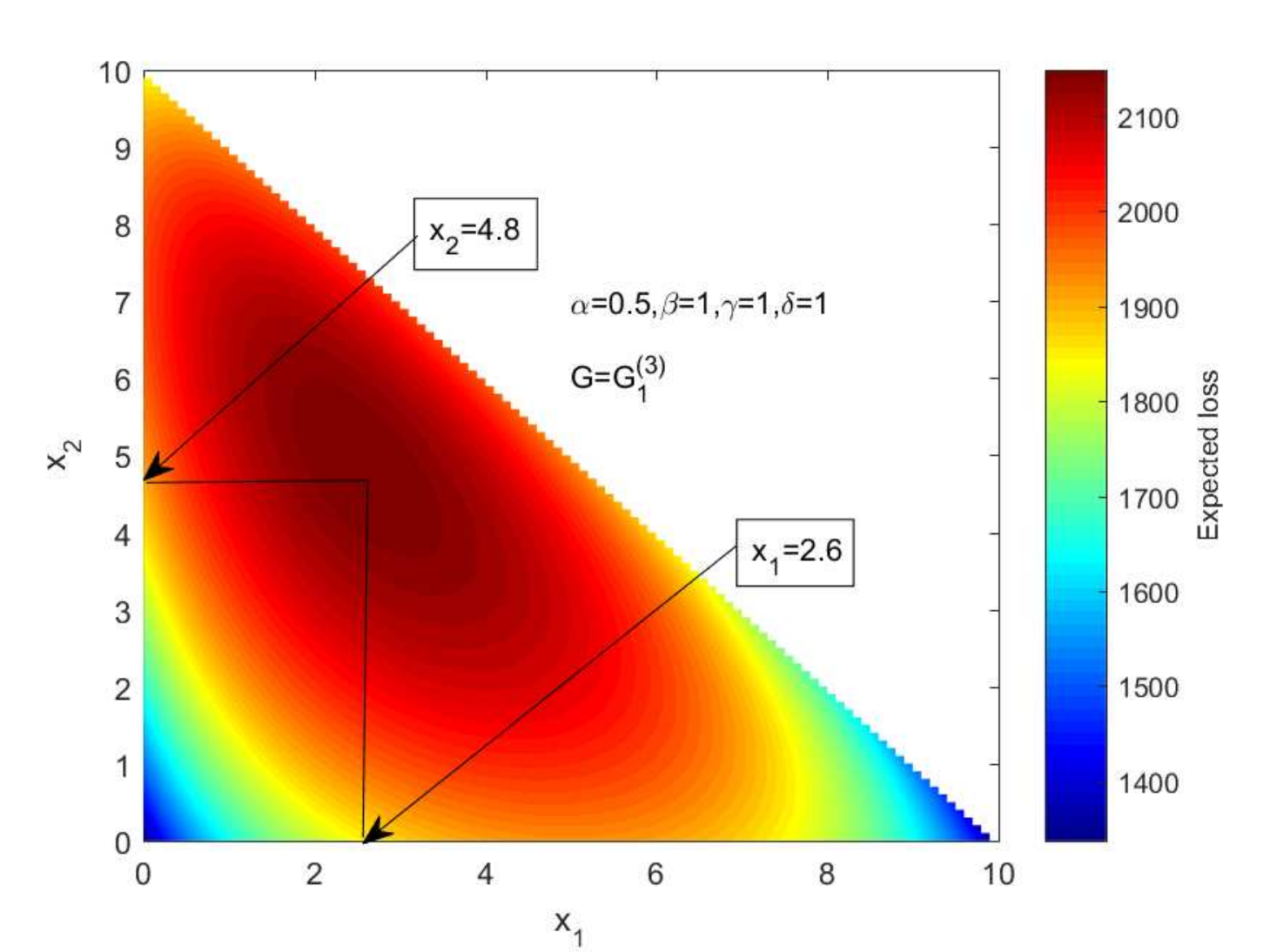}\label{a}}
	\subfloat[]{\includegraphics[width=0.25\textwidth]{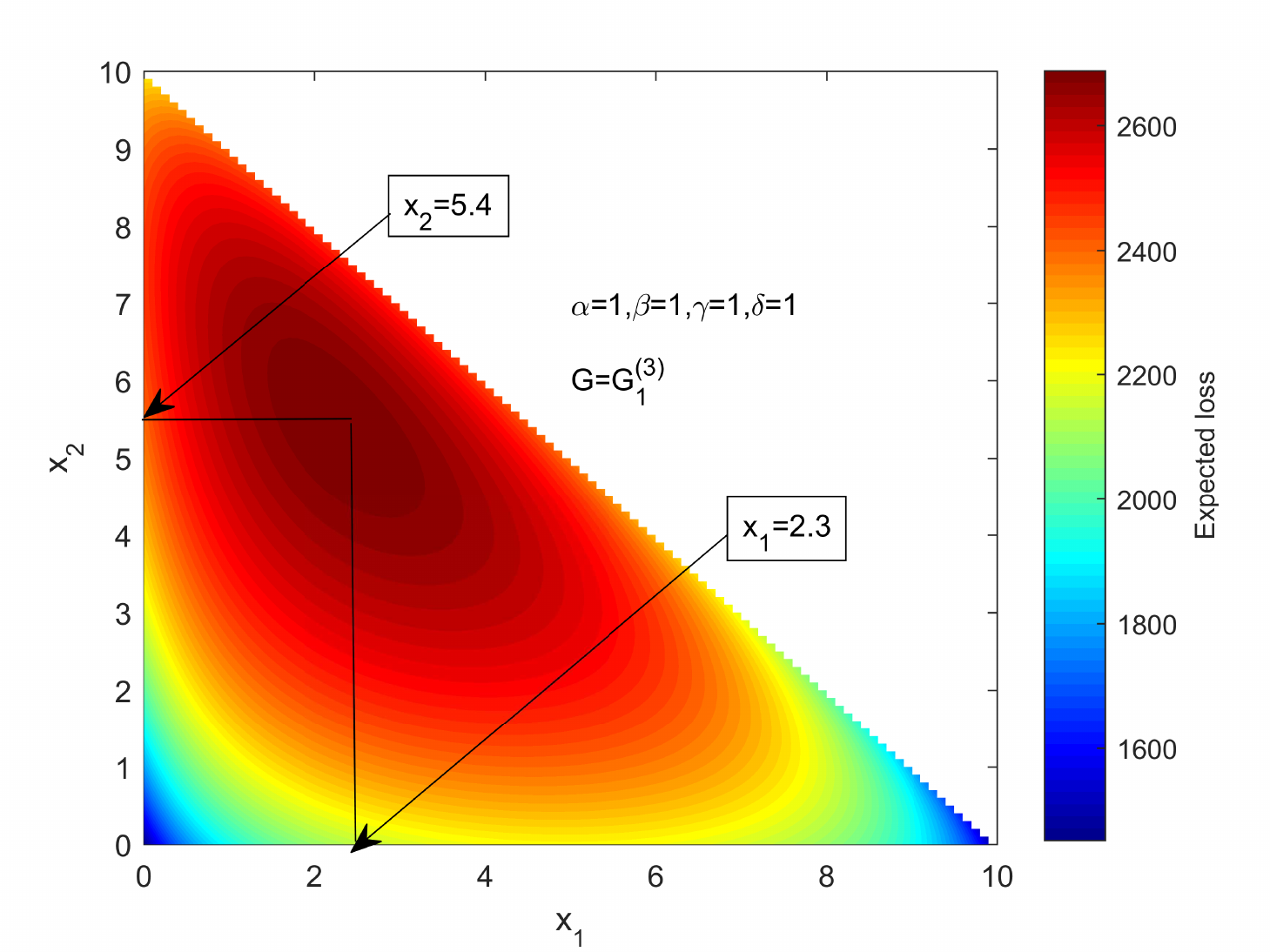}\label{a}} \\
	\subfloat[]{\includegraphics[width=0.25\textwidth]{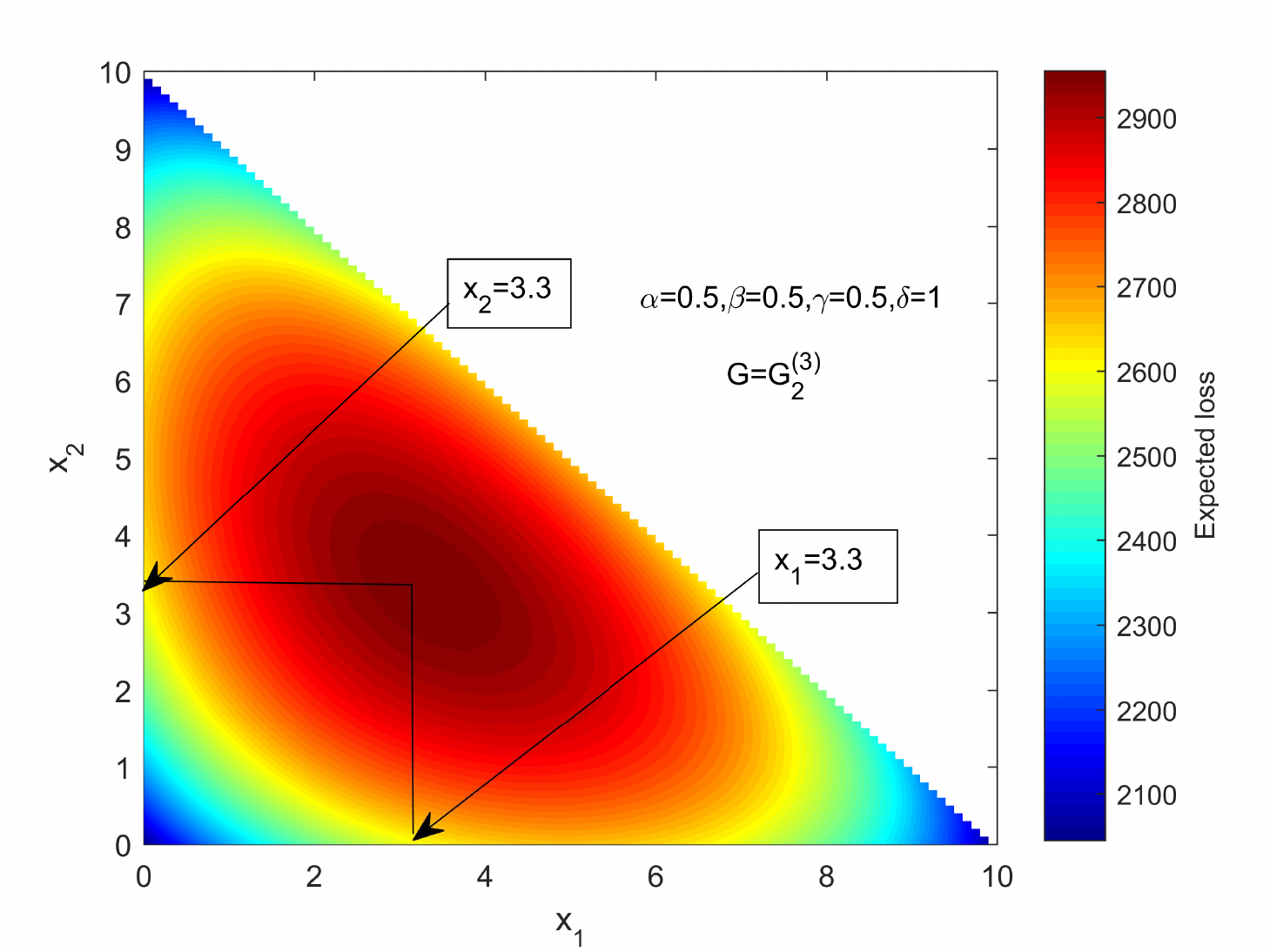}\label{a}}
	\subfloat[]{\includegraphics[width=0.25\textwidth]{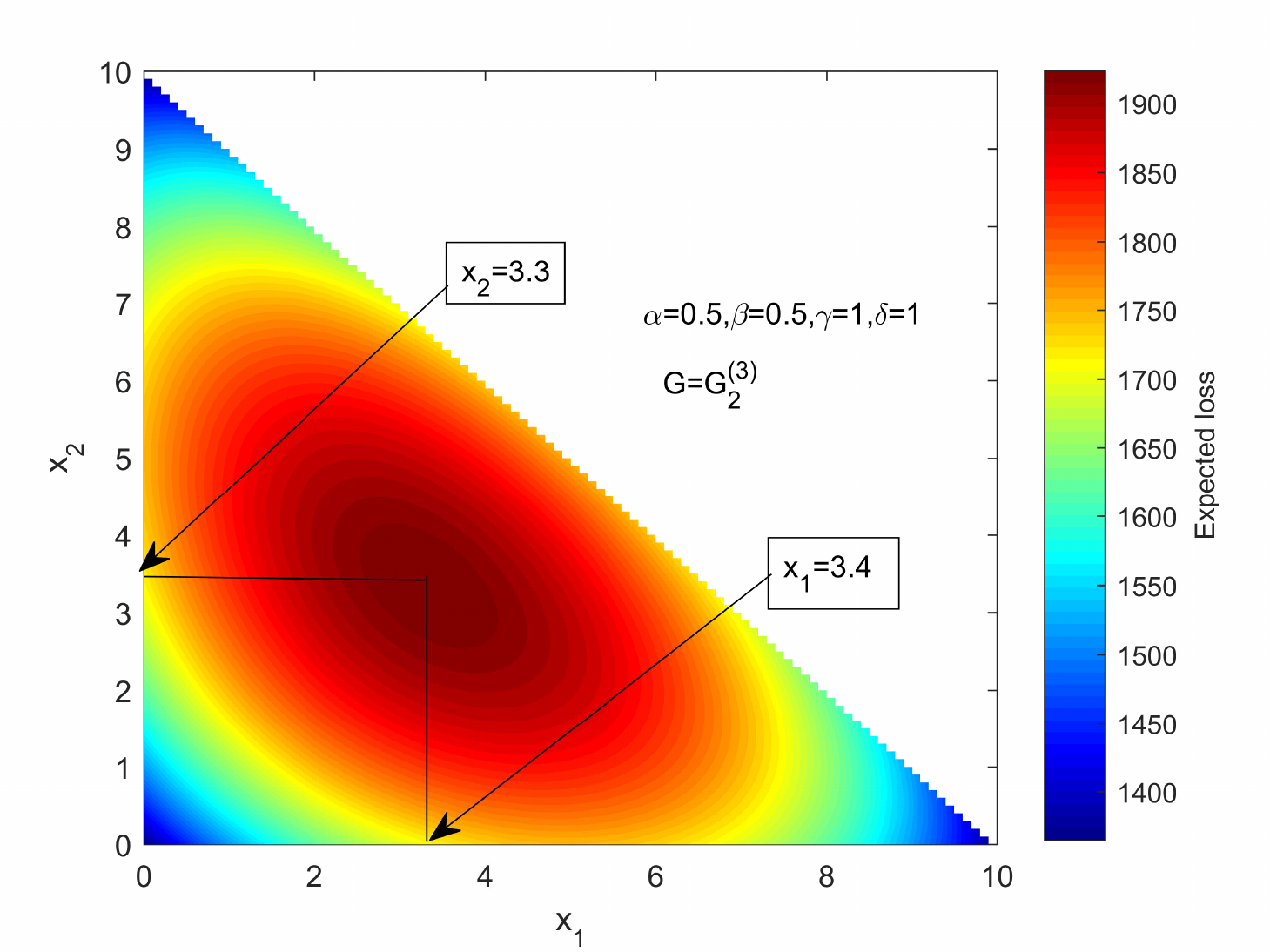}\label{a}}
	\subfloat[]{\includegraphics[width=0.25\textwidth]{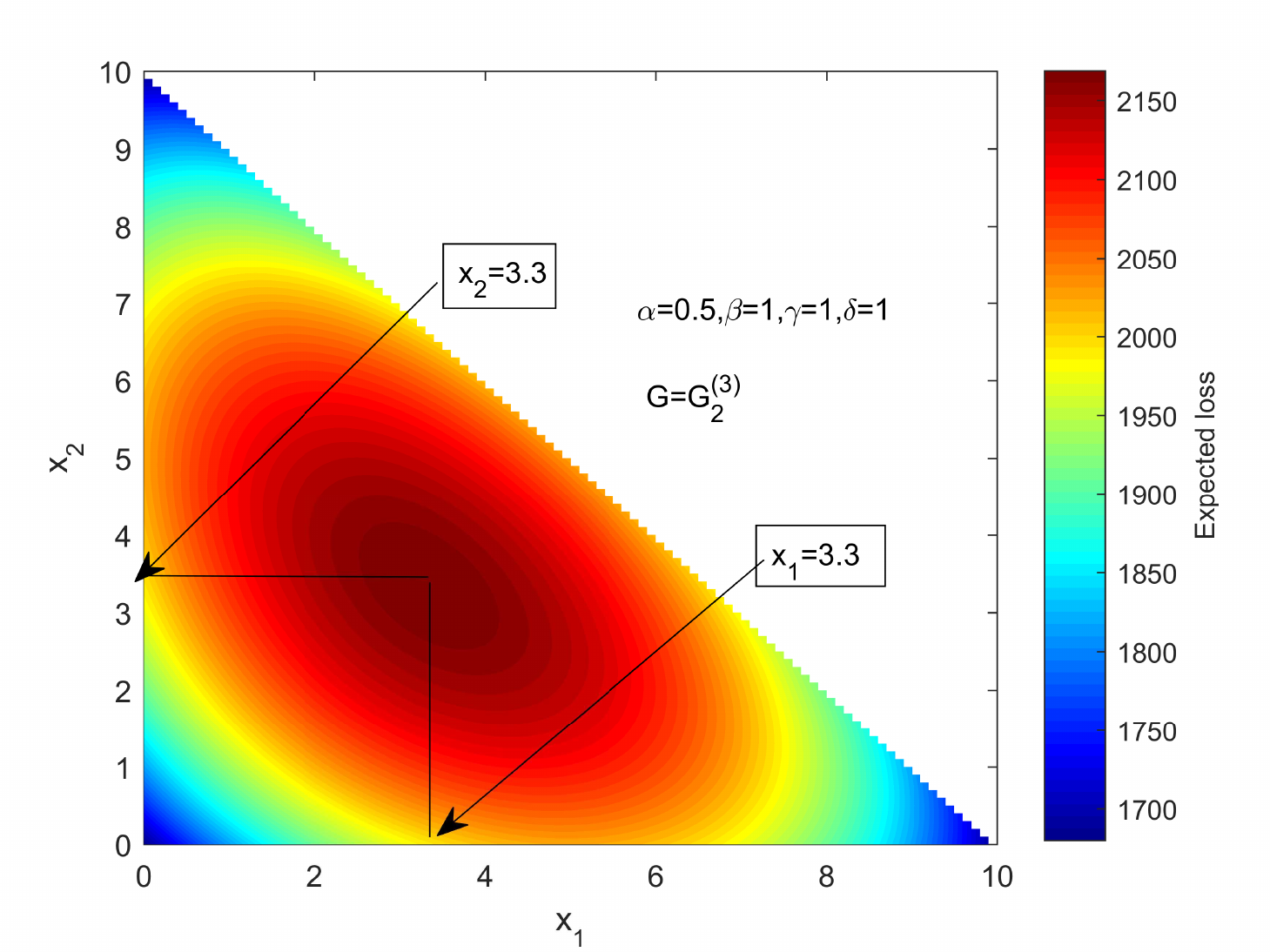}\label{a}}
	\subfloat[]{\includegraphics[width=0.25\textwidth]{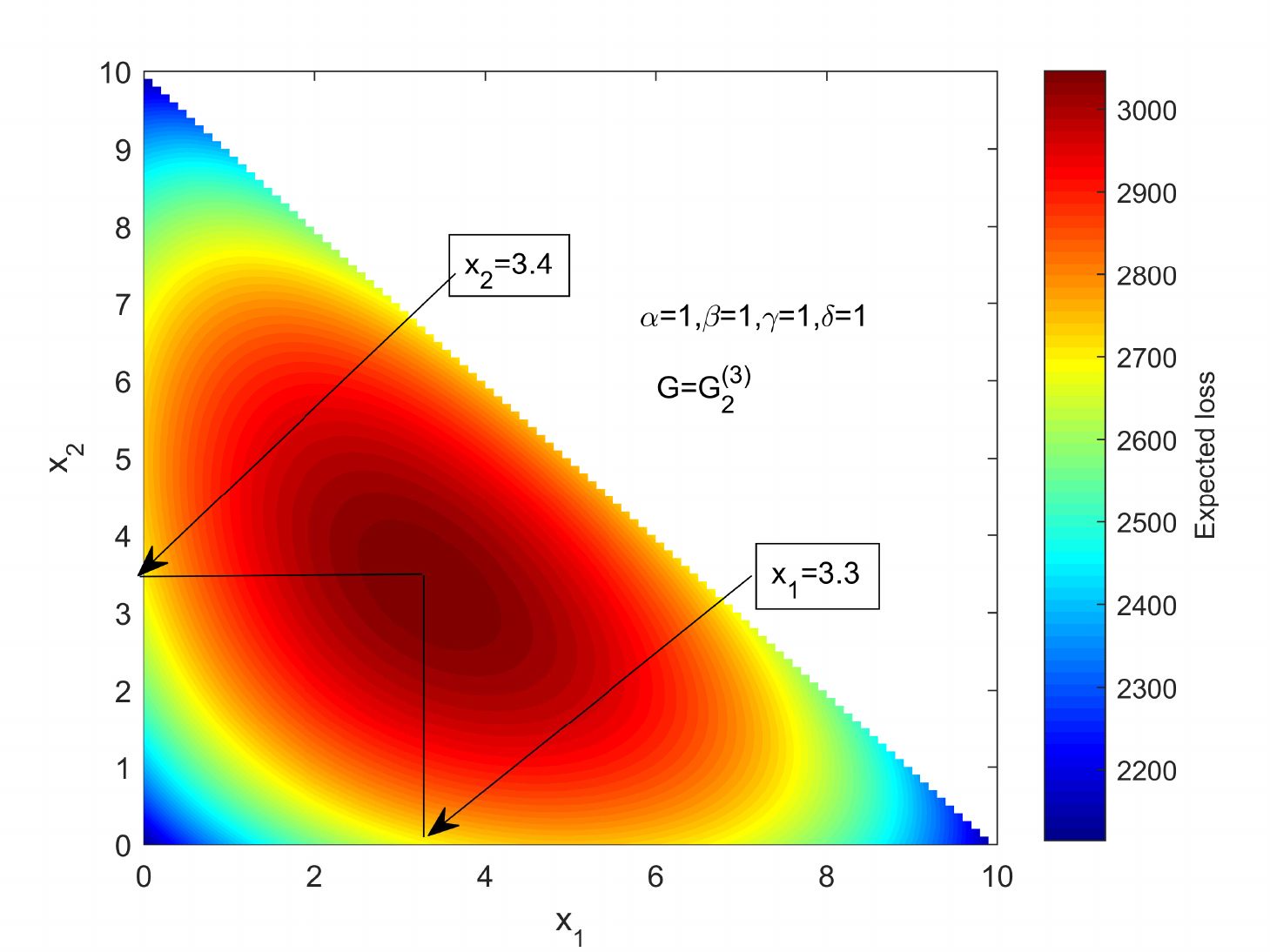}\label{a}}
	\caption{A graphical representation of the objective functions in Experiment 2.}
\end{figure}

\begin{figure}[!t]
	\setlength{\abovecaptionskip}{0.cm}
	\setlength{\belowcaptionskip}{-0.cm}
	\centering
	\subfloat[$G_1^{(4)}$]{\includegraphics[width=0.1\textwidth]{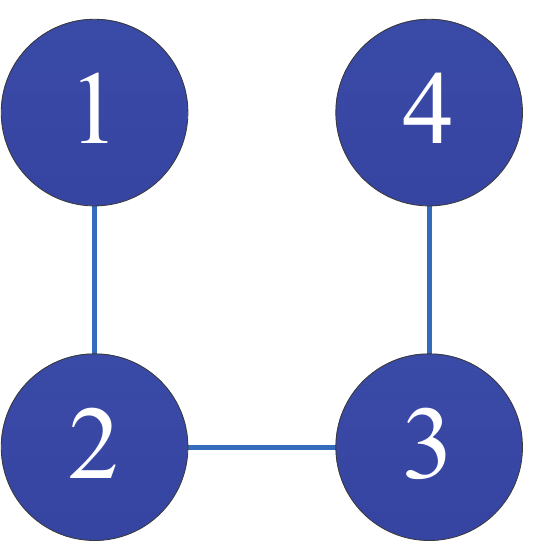}\label{a}}
	\hspace{2ex}
	\subfloat[$G_2^{(4)}$]{\includegraphics[width=0.15\textwidth]{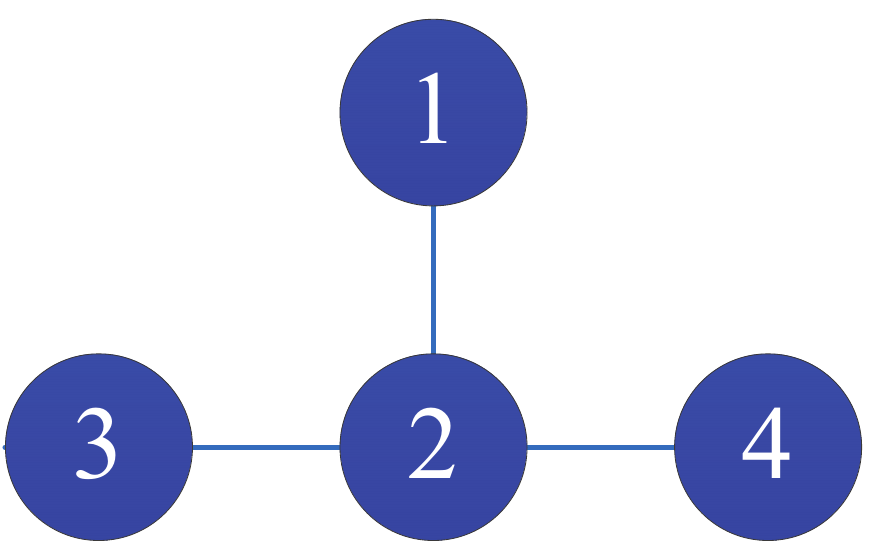}\label{a}}
	\hspace{2ex}
	\subfloat[$G_3^{(4)}$]{\includegraphics[width=0.1\textwidth]{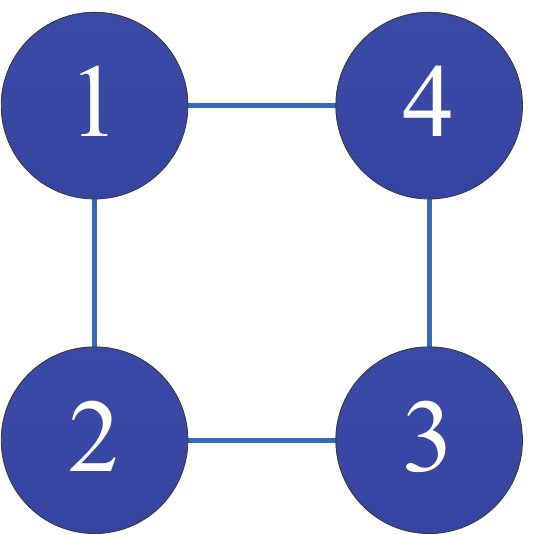}\label{a}}\\
	\subfloat[$G_4^{(4)}$]{\includegraphics[width=0.1\textwidth]{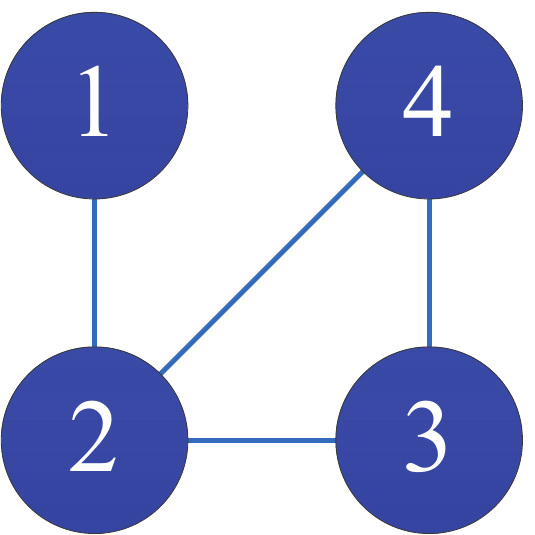}\label{a}}
	\hspace{5ex}
	\subfloat[$G_5^{(4)}$]{\includegraphics[width=0.1\textwidth]{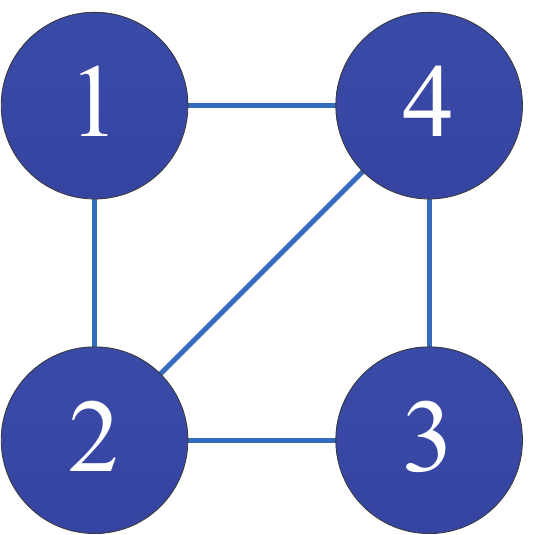}\label{a}}
	\hspace{5ex}
	\subfloat[$G_6^{(4)}$]{\includegraphics[width=0.1\textwidth]{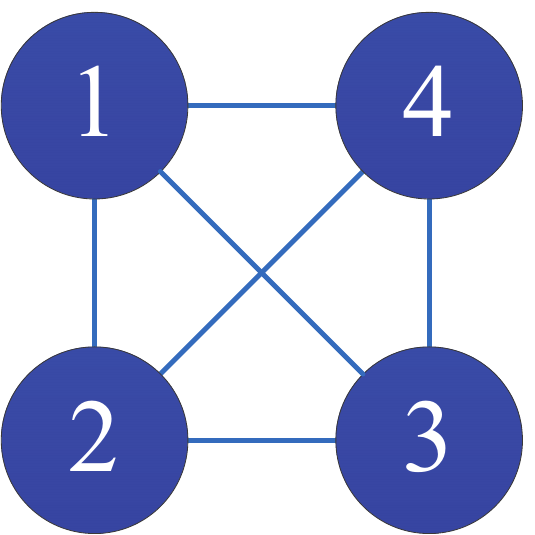}\label{a}}
	\caption{Six connected graphs with four nodes.}
\end{figure}

\begin{expe} Up to isomorphism, there are totally six different graphs with four nodes, $G_i^{(4)}$, $i = 1, \cdots, 6$, which are shown in Fig. 5. Fig. 6 plots a cross figure for each of the 12 functions $\hat{L}(\hat{\mathbf{x}}; G, \alpha, \beta, \delta, \gamma, T)$ with the following combinations of parameters:
\begin{tabbing}
	\hspace{5cm} $G$ \quad\quad\= $\alpha$ \quad\quad\= $\beta$ \quad\quad\= $\delta$ \quad\quad\= $\gamma$ \quad\quad\= T \quad\quad\= B\\
	\hspace{5cm} $G_1^{(4)}$ \> 0.5 \> 0.5 \> 0.5/1 \> 1 \> 10 \> 10\\
	\hspace{5cm} $G_2^{(4)}$ \> 0.5 \> 0.5 \> 0.5/1 \> 1 \> 10 \> 10\\
	\hspace{5cm} $G_3^{(4)}$ \> 0.5 \> 0.5 \> 0.5/1 \> 1 \> 10 \> 10\\
	\hspace{5cm} $G_4^{(4)}$ \> 0.5 \> 0.5 \> 0.5/1 \> 1 \> 10 \> 10\\
	\hspace{5cm} $G_5^{(4)}$ \> 0.5 \> 0.5 \> 0.5/1 \> 1 \> 10 \> 10\\
	\hspace{5cm} $G_6^{(4)}$ \> 0.5 \> 0.5 \> 0.5/1 \> 1 \> 10 \> 10
\end{tabbing}
It is seen that these cross functions are all unimodal. More extensive experiments demonstrate that the 12 original functions are all unimodal.
	
\end{expe}

\begin{figure}[!t]
	\setlength{\abovecaptionskip}{0.cm}
	\setlength{\belowcaptionskip}{-0.cm}
	\centering
	\subfloat[]{\includegraphics[width=0.25\textwidth]{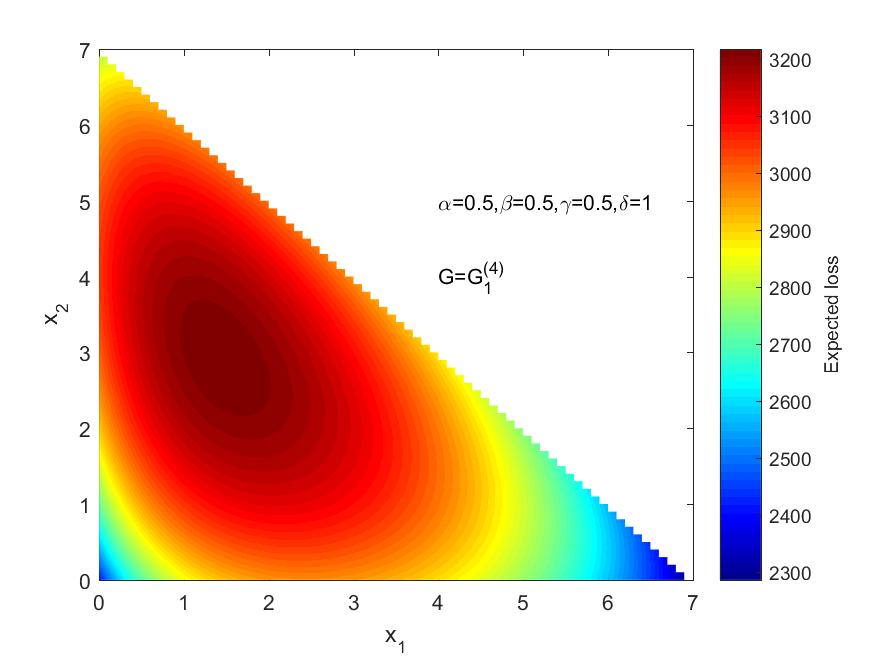}\label{a}}
	\subfloat[]{\includegraphics[width=0.25\textwidth]{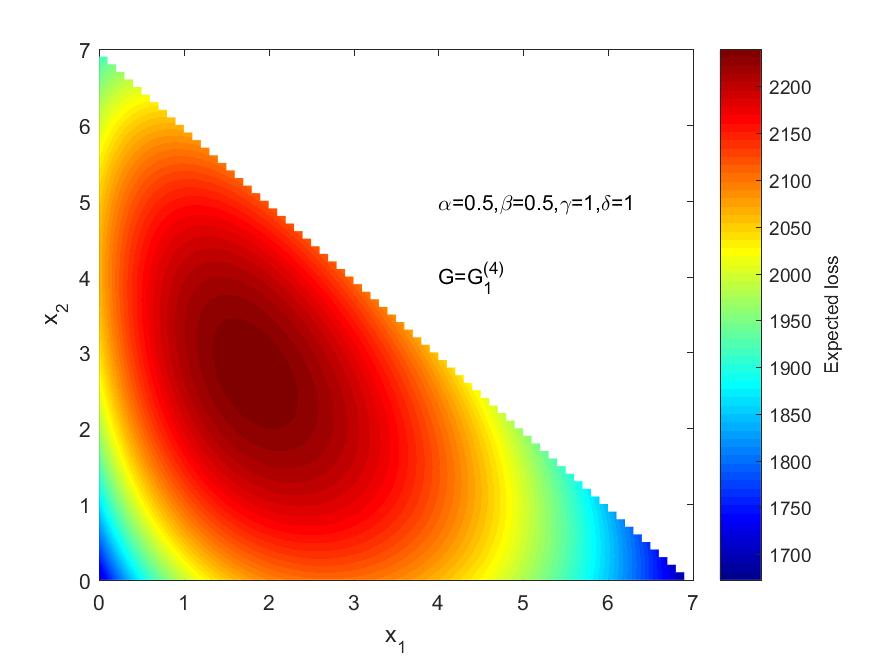}\label{a}}
	\subfloat[]{\includegraphics[width=0.25\textwidth]{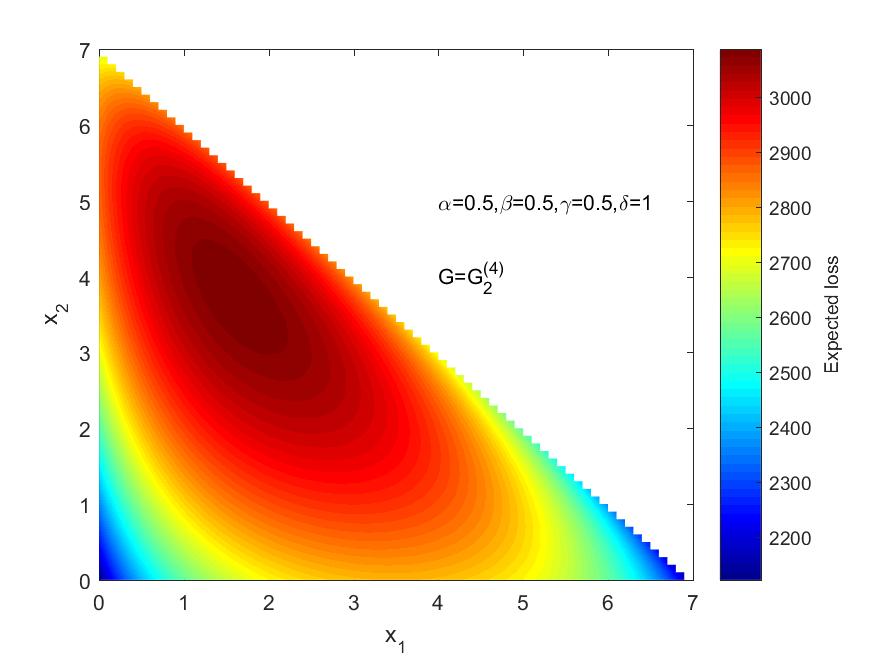}\label{a}}
	\subfloat[]{\includegraphics[width=0.25\textwidth]{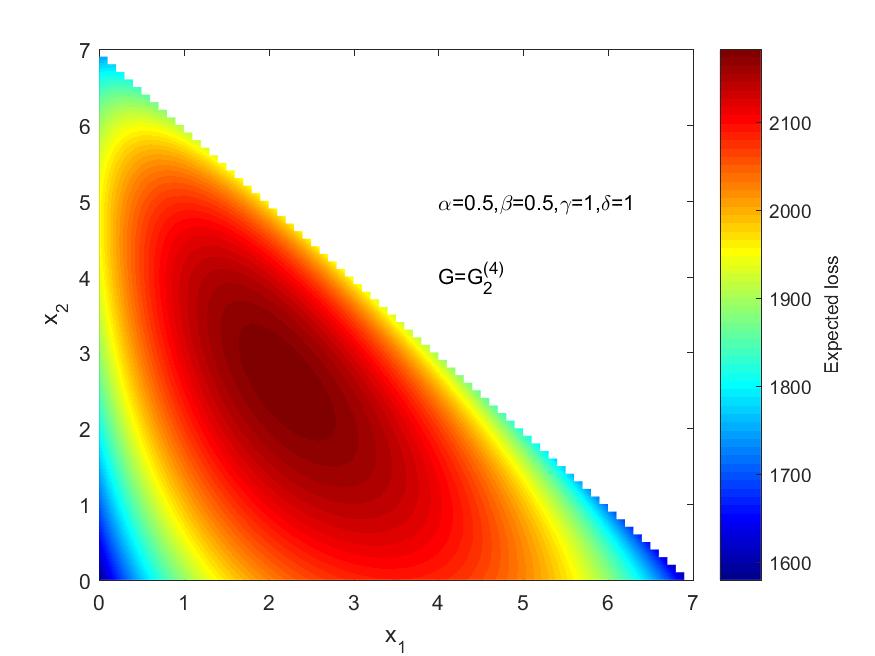}\label{a}} \\
	\subfloat[]{\includegraphics[width=0.25\textwidth]{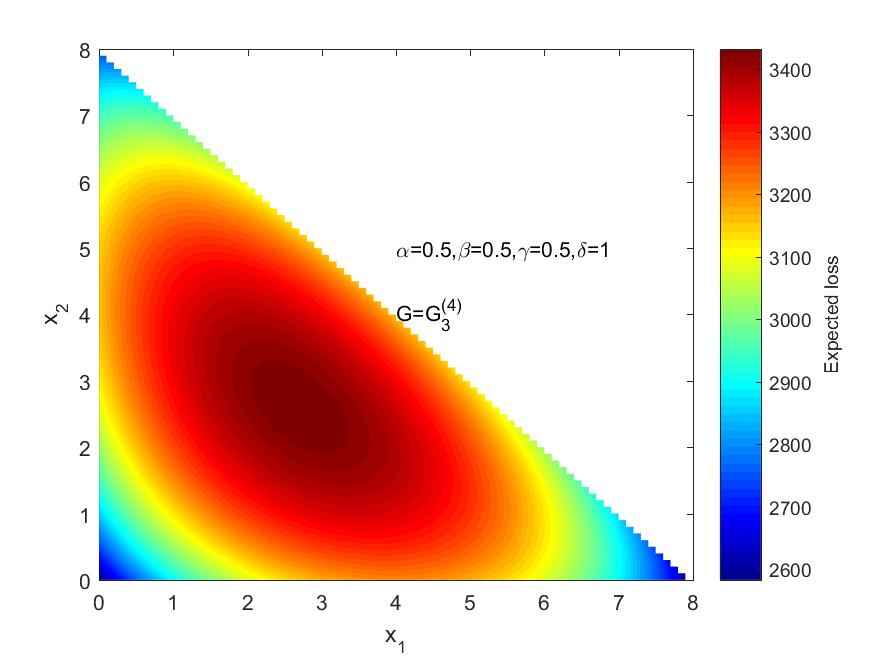}\label{a}}
	\subfloat[]{\includegraphics[width=0.25\textwidth]{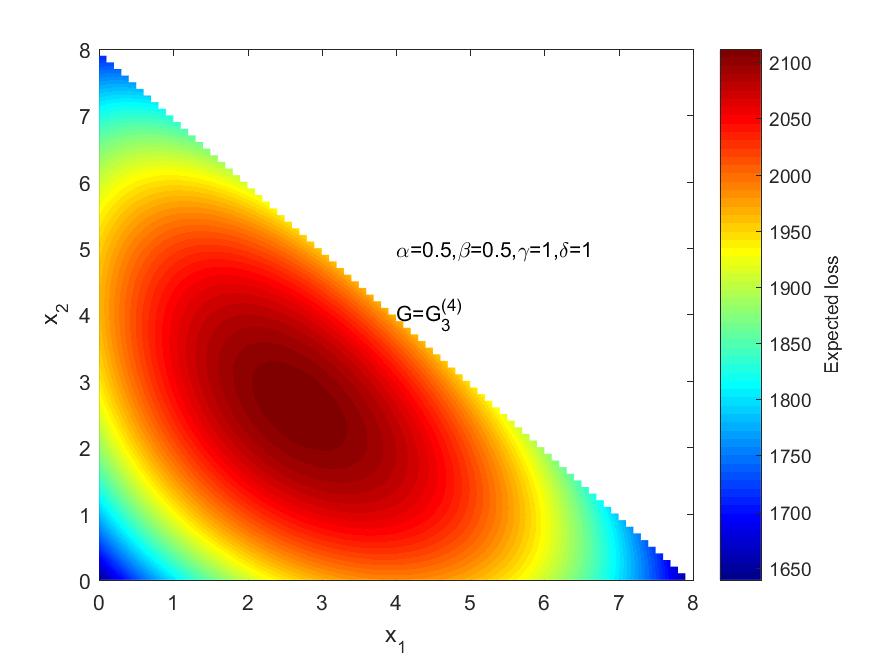}\label{a}}
	\subfloat[]{\includegraphics[width=0.25\textwidth]{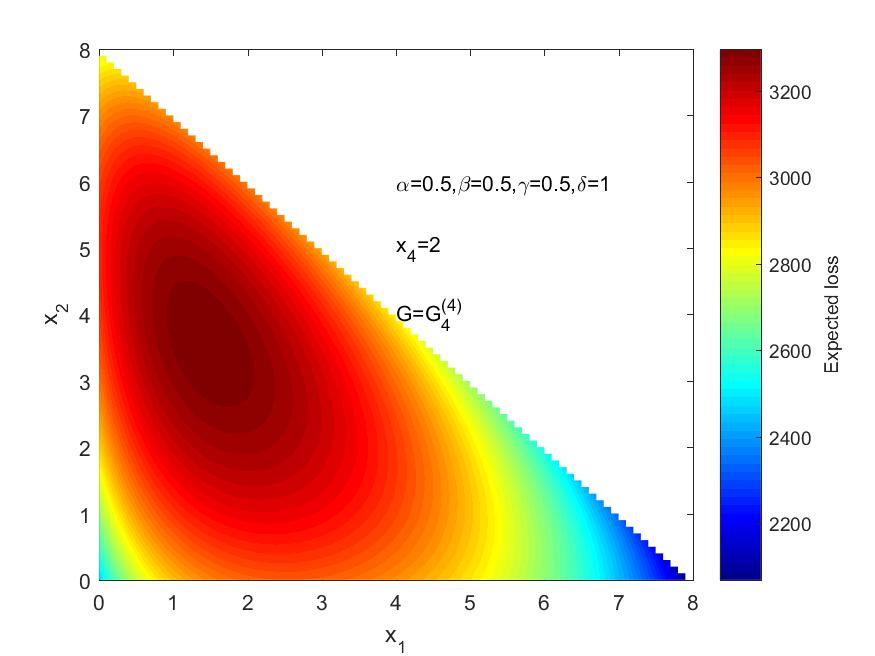}\label{a}}
	\subfloat[]{\includegraphics[width=0.25\textwidth]{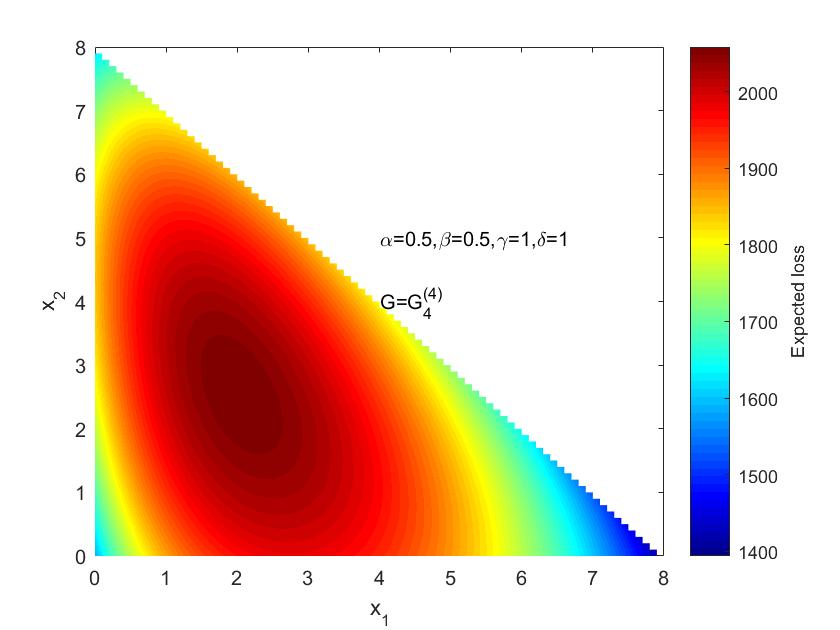}\label{a}}  \\
	\subfloat[]{\includegraphics[width=0.25\textwidth]{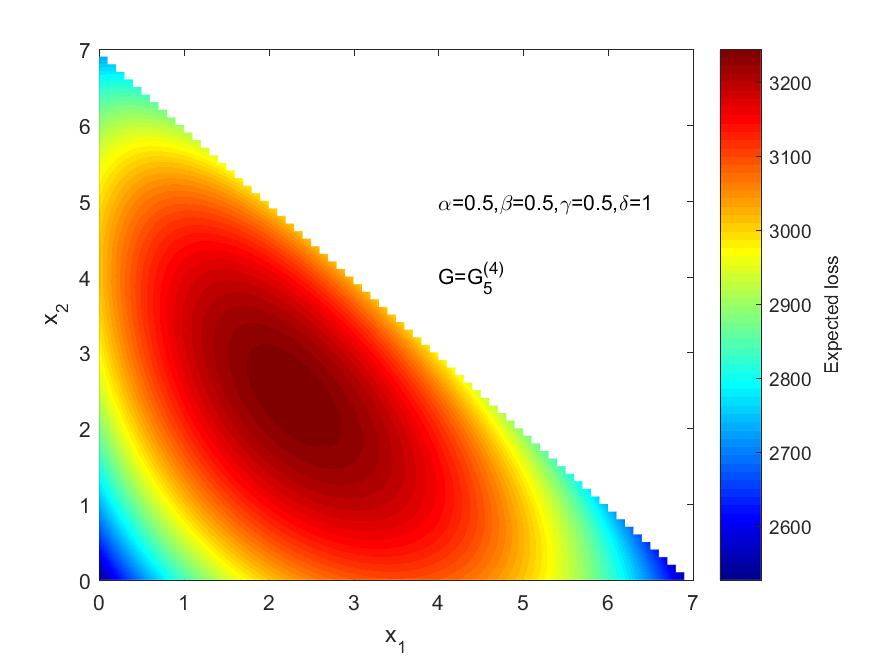}\label{a}}
	\subfloat[]{\includegraphics[width=0.25\textwidth]{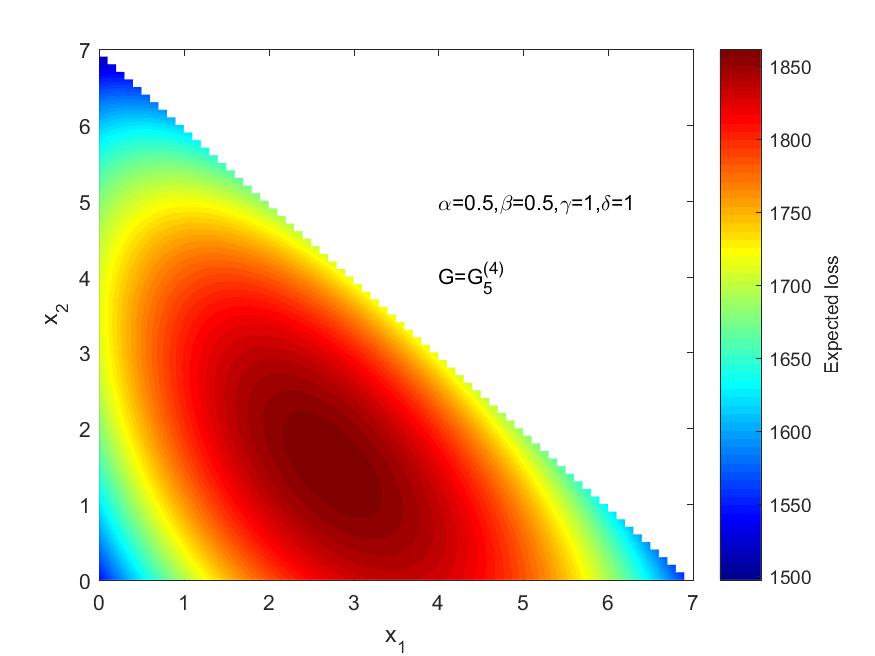}\label{a}}
	\subfloat[]{\includegraphics[width=0.25\textwidth]{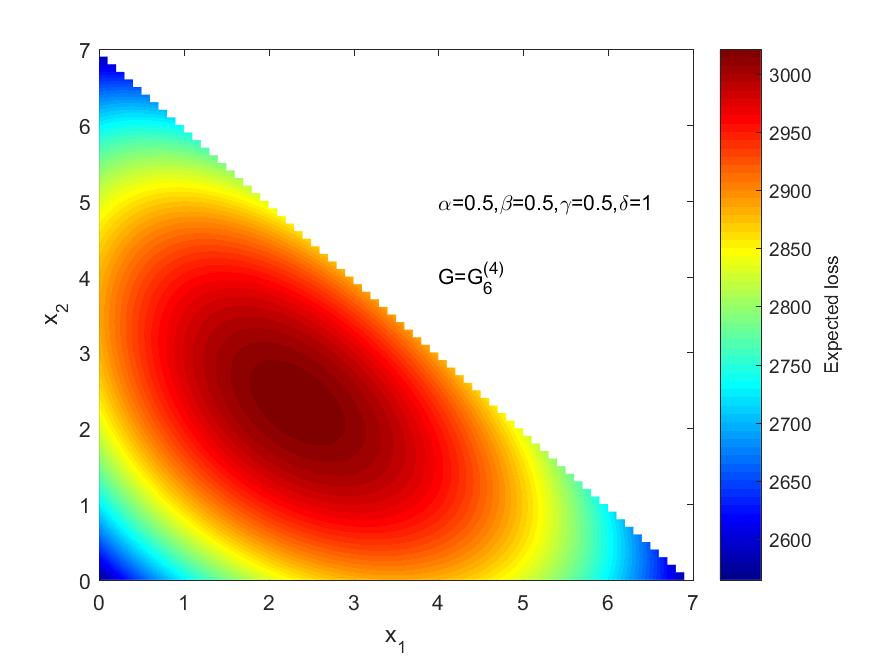}\label{a}}
	\subfloat[]{\includegraphics[width=0.25\textwidth]{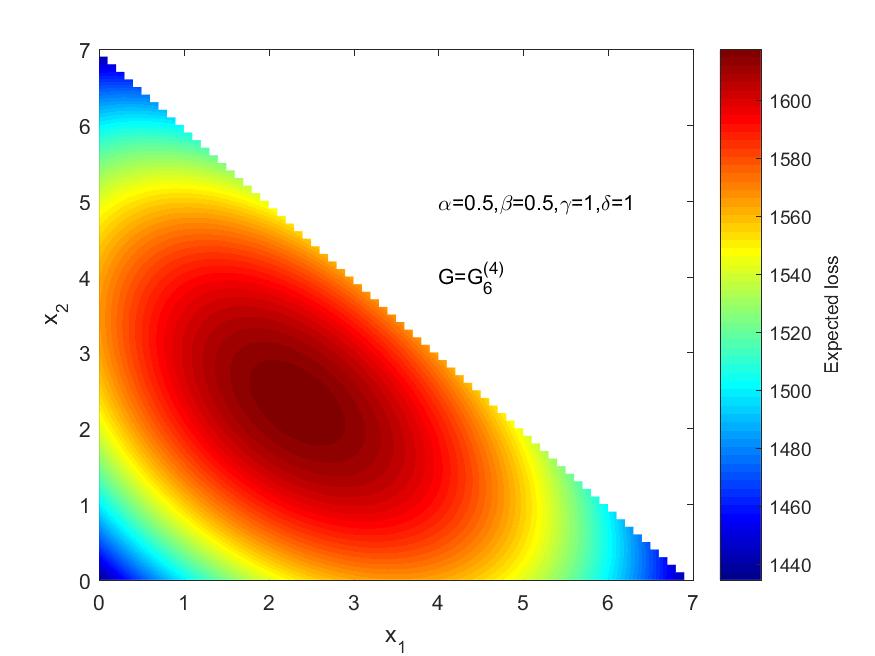}\label{a}}
	\caption{Some cross figures of the functions in Experiment 3.}
\end{figure}

We conclude from these and many similar experiments that the objective function in each RRA model is unimodal, which in turn implies that the objective function in each RA model is unimodal. Hence, for each RA model, the solution obtained through hill climbing is much likely to be optimal.

To formulate our hill-climbing method, we need to introduce a notion as follows.

Let $M_{RA} = (G, \alpha, \beta, \delta, \gamma, T, B)$, $\epsilon > 0$ a small number. The \emph{$\epsilon$-neighborhood} of $\mathbf{x} \in \Omega_B$, denoted $N_{\epsilon}(\mathbf{x})$, is defined as
\begin{equation}
N_{\epsilon}(\mathbf{x}) = \{\mathbf{y} \in \Omega_B : \mathbf{y} - \mathbf{x} \text{ has exactly two nonzero components, one being } \epsilon, \text{ the other -} \epsilon \}.
\end{equation}
And every $\mathbf{y} \in N_{\epsilon}(\mathbf{x})$ is referred to as a \emph{$\epsilon$-neighbor} of $\mathbf{x}$.

Now, we are ready to formulate our method for finding an attack strategy.

\begin{algorithm}[H]
	\caption{HILL-CLIMBING}
	\label{alg1}
	\hspace*{0.02in} {\bf Input} $M_{RA} = (G, \alpha, \beta, \delta, \gamma, T, B)$; $\epsilon = 10^{-6}$.\\
	\hspace*{0.02in} {\bf Output} $\mathbf{x} \in \Omega_B$; $L(\mathbf{x}; G, \alpha, \beta, \delta, \gamma, T)$.
	\begin{algorithmic}[1]
		\STATE randomly choose $\mathbf{x} \in \Omega_B$;
		\WHILE {$\mathbf{x}$ has a $\epsilon$-neighbor $\mathbf{y}$ such that $L(\mathbf{y}; G, \alpha, \beta, \delta, \gamma, T) > L(\mathbf{x}; G, \alpha, \beta, \delta, \gamma, T)$}
		\STATE $\mathbf{x} := \mathbf{y}$;
		\ENDWHILE
		\STATE return $(\mathbf{x}, L(\mathbf{x}; G, \alpha, \beta, \delta, \gamma, T))$.
	\end{algorithmic}
\end{algorithm}

We refer to the attack strategy obtained by executing the HILL-CLIMBING algorithm on a RA model as the \emph{HC strategy} for the RA model, and the expected loss owing to the HC strategy as the \emph{HC risk} for the RA model.

It is seen from Experiments 1-3 that, for each of these RA models, the associated HC strategy is optimal. Through extensive computer experiments, we conclude the following result.

\emph{For each and every RA model, the associated HC strategy is optimal.}

\subsection{Five heuristic attack strategies}

To examine the optimality of the HC strategy, we need to make comparisons on larger networks between this strategy and some other attack strategies. For this purpose, below let us formulate five heuristic attack strategies.

The first heuristic attack strategy is to use up the attack budget to attack a single node of the highest security level. That is,
\begin{equation}
  \mathbf{x} = (0, \cdots, 0, B, 0, \cdots, 0),
\end{equation}
where the target node is of the highest security level, with the ice being broken arbitrarily. We refer to the attack strategy as the \emph{highest security-level (HS) strategy}.

The second heuristic attack strategy is to deplete the attack budget to attack a single node of the lowest security level. That is,
\[
\mathbf{x} = (0, \cdots, 0, B, 0, \cdots, 0),
\]
where the target node is of the lowest security level, with the deadlock being broken arbitrarily. We refer to the attack strategy as the \emph{lowest security-level (LS) strategy}.

The third heuristic attack strategy is to assign to each node an attack cost that is linearly proportional to the security level of the node. That is,
\begin{equation}
\mathbf{x} = (\frac{Bw_1}{\sum_{i=1}^Nw_i}, \frac{Bw_2}{\sum_{i=1}^Nw_i}, \cdots, \frac{Bw_N}{\sum_{i=1}^Nw_i}).
\end{equation}
We refer to the attack strategy as the \emph{security-level first (SF) strategy}.

The fourth heuristic attack strategy is to assign to each node an attack cost that is inversely linearly proportional to the security level of the node. That is,
\begin{equation}
\mathbf{x} = (\frac{\frac{B}{w_1}}{\sum_{i=1}^N\frac{1}{w_i}}, \frac{\frac{B}{w_2}}{\sum_{i=1}^N\frac{1}{w_i}}, \cdots, \frac{\frac{B}{w_N}}{\sum_{i=1}^N\frac{1}{w_i}}).
\end{equation}
We refer to the attack strategy as the \emph{security-level last (SL) strategy}.

The fifth heuristic attack strategy is to allocate the attack budget uniformly among all nodes. That is
\begin{equation}
\mathbf{x} = (\frac{B}{N}, \frac{B}{N}, \cdots, \frac{B}{N}).
\end{equation}
We refer to the attack strategy as the \emph{uniform (UN) strategy}.

\subsection{Comparative experiments}

This section conducts experimental comparisons between the HC strategy and the five heuristic attack strategies in terms of the expected loss. For this purpose, let us describe three networks that will be used in the following experiments.

Small-world networks are networks that are generated by randomly rewiring some edges of regular networks. Fig. 7 plots a small-world network with 50 nodes, which is obtained by executing the algorithm proposed by Watts and Strogatz \cite{Watts1998}. Let $G_{SW}$ denote this network.

\begin{figure}[!t]
	\setlength{\abovecaptionskip}{0.cm}
	\setlength{\belowcaptionskip}{-0.cm}
	\centering
	\includegraphics[width=0.5\textwidth]{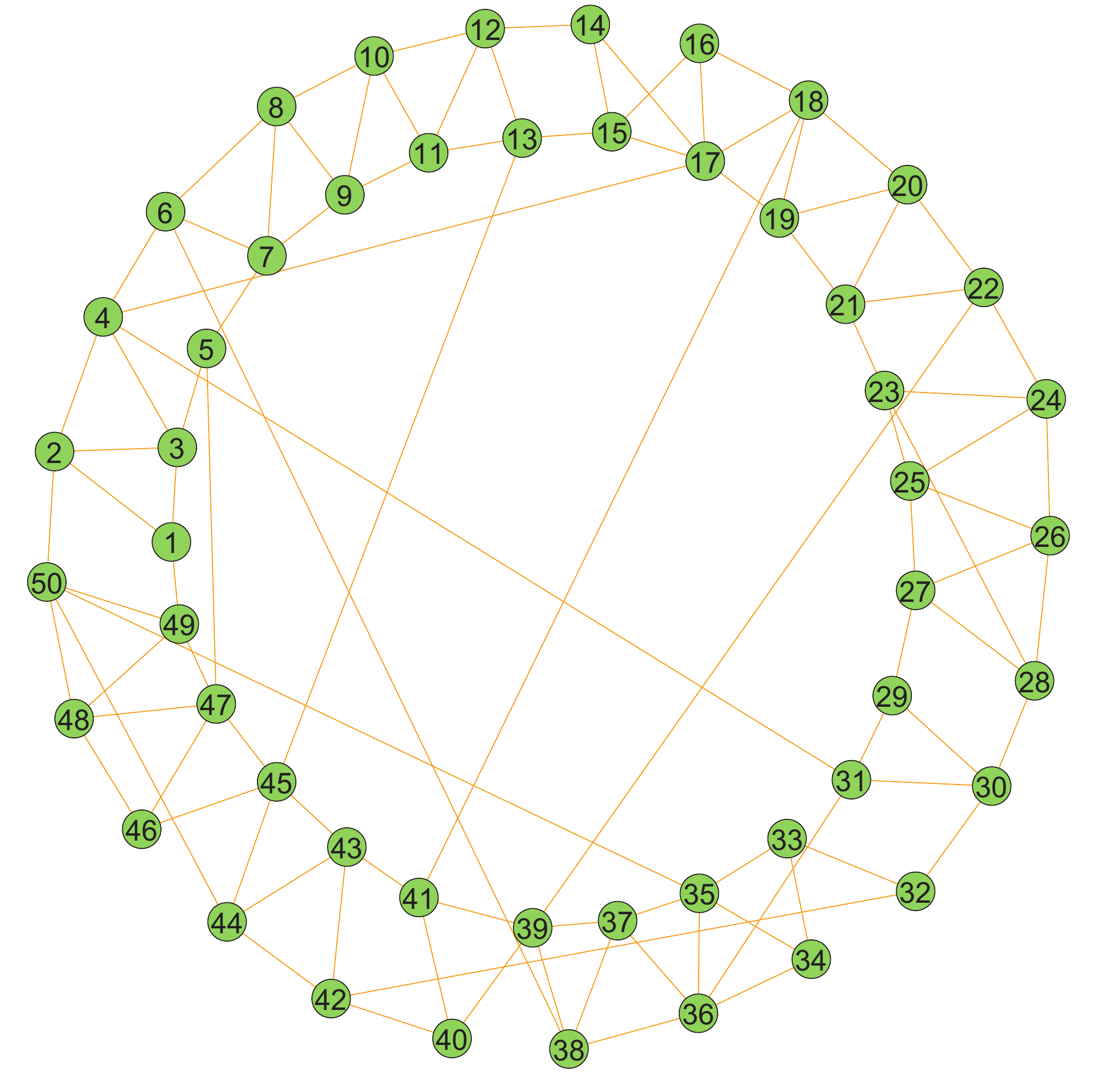}
	\caption{The small-world network $G_{SW}$.}
\end{figure}

Scale-free networks are networks with an approximate power-law degree distribution. Fig. 8 depicts a scale-free network with 50 nodes, which is obtained by executing the algorithm proposed by Barabasi and Albert \cite{Barabasi1999}. Let $G_{SF}$ denote this network.

\begin{figure}[!t]
	\setlength{\abovecaptionskip}{0.cm}
	\setlength{\belowcaptionskip}{-0.cm}
	\centering
	\includegraphics[width=0.5\textwidth]{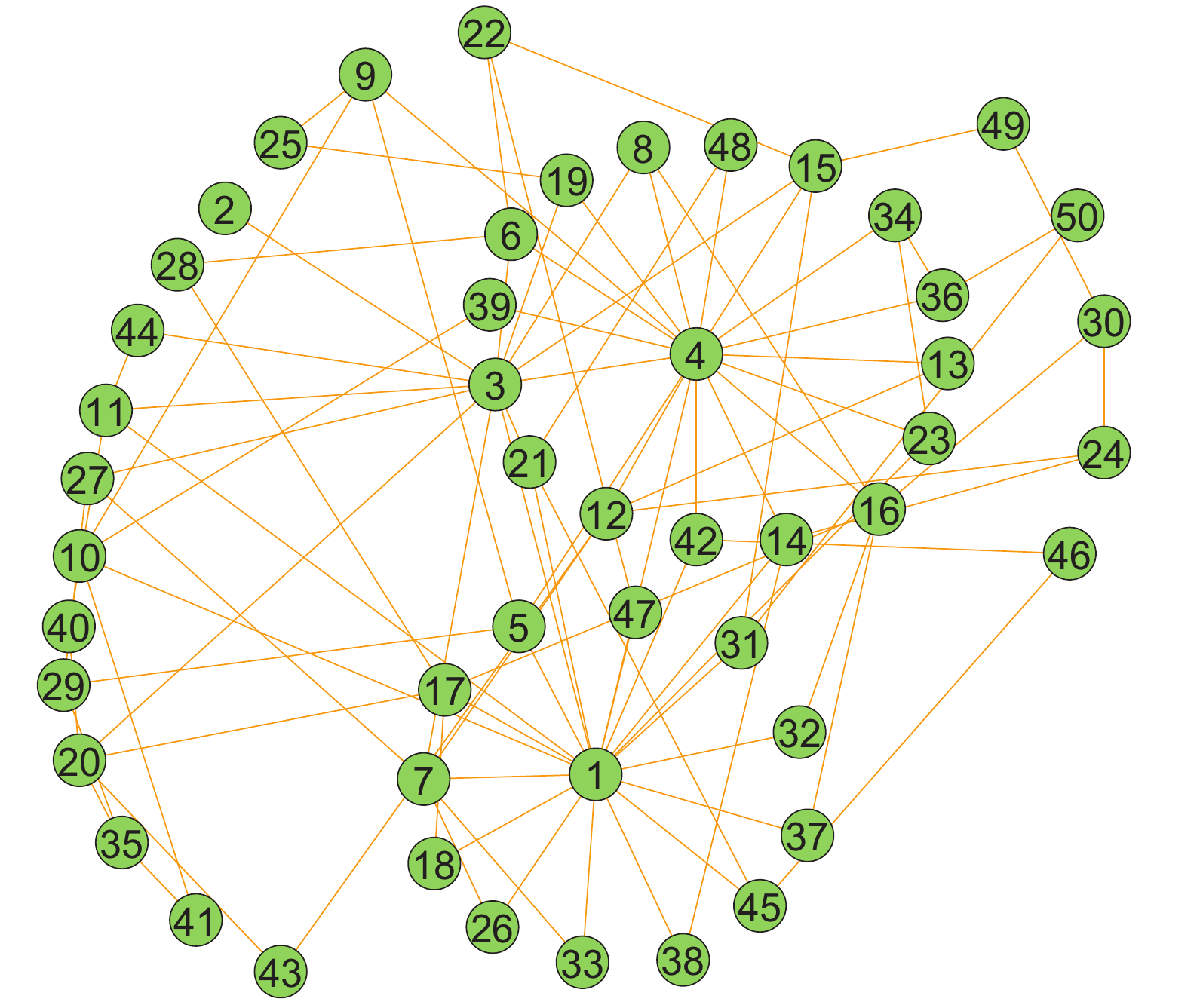}
	\caption{The scale-free network $G_{SF}$.}
\end{figure}

Fig. 9 exhibits a realistic network with 49 nodes, which comes from Ref. \cite{konect}. Let $G_{KO}$ denote this network.

\begin{figure}[!t]
	\setlength{\abovecaptionskip}{0.cm}
	\setlength{\belowcaptionskip}{-0.cm}
	\centering
	\includegraphics[width=0.5\textwidth]{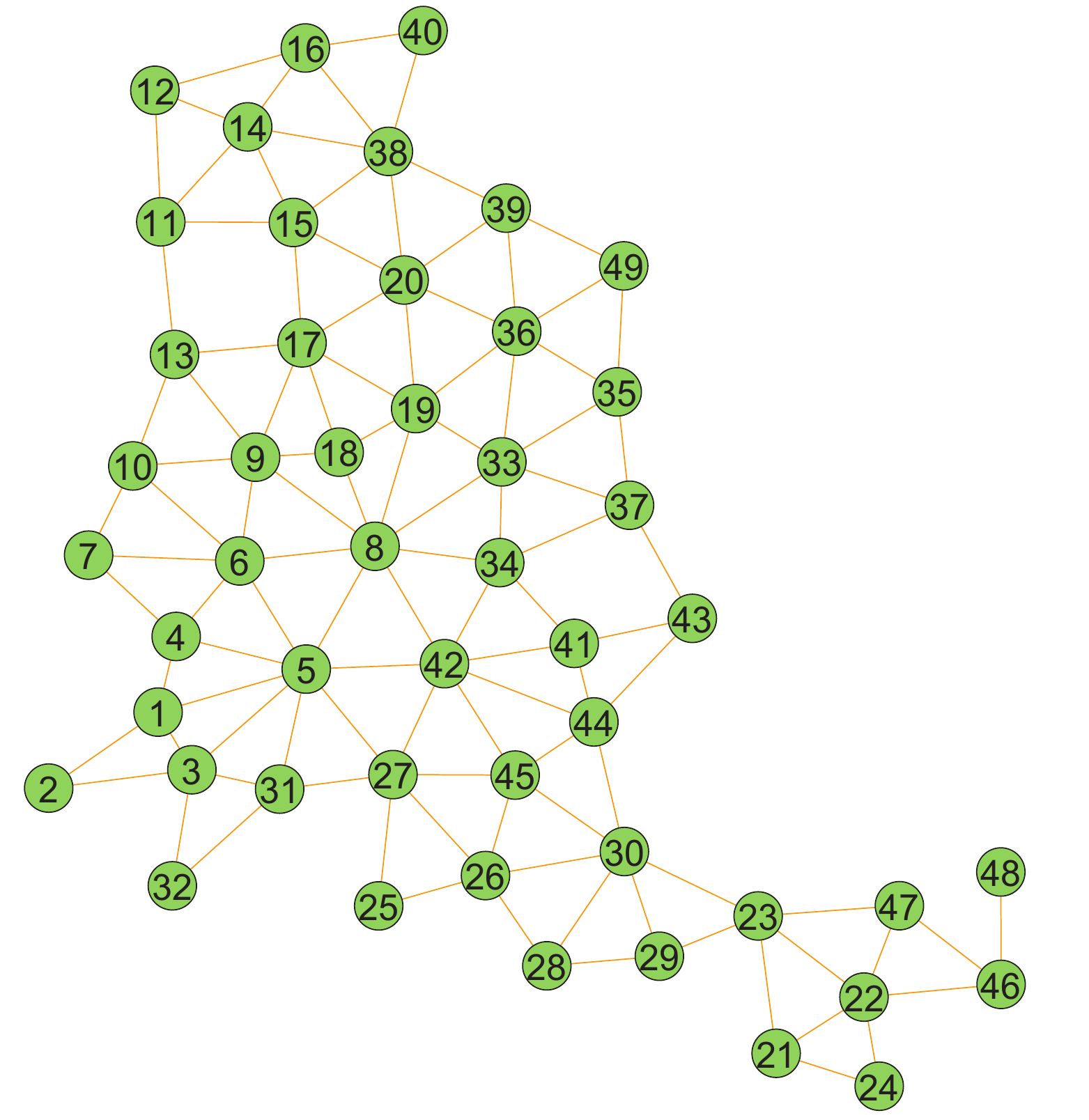}
	\caption{The realistic network $G_{US}$.}
\end{figure}

\begin{expe}
Consider a set of RA models $(G, 1, 1, 1, 1, T, B)$, where $G \in \{G_{SW}, G_{SF}, G_{US}\}$, either (a) $T = 5$ and $B \in \{1, 2, \dots, 10\}$, or (b) $B = 10$ and $T \in \{5, 6, \cdots, 15\}$. For each of these RA models, the HC strategy is compared with the five heuristic attack strategies in terms of the expected loss, and the experimental results are all shown in Fig. 10. It is seen that, for all these RA models, the HC strategy outperforms the five heuristic strategies.
\end{expe}

\begin{figure}[!t]
	\setlength{\abovecaptionskip}{0.cm}
	\setlength{\belowcaptionskip}{-0.cm}
	\centering
	\includegraphics[width=0.8\textwidth]{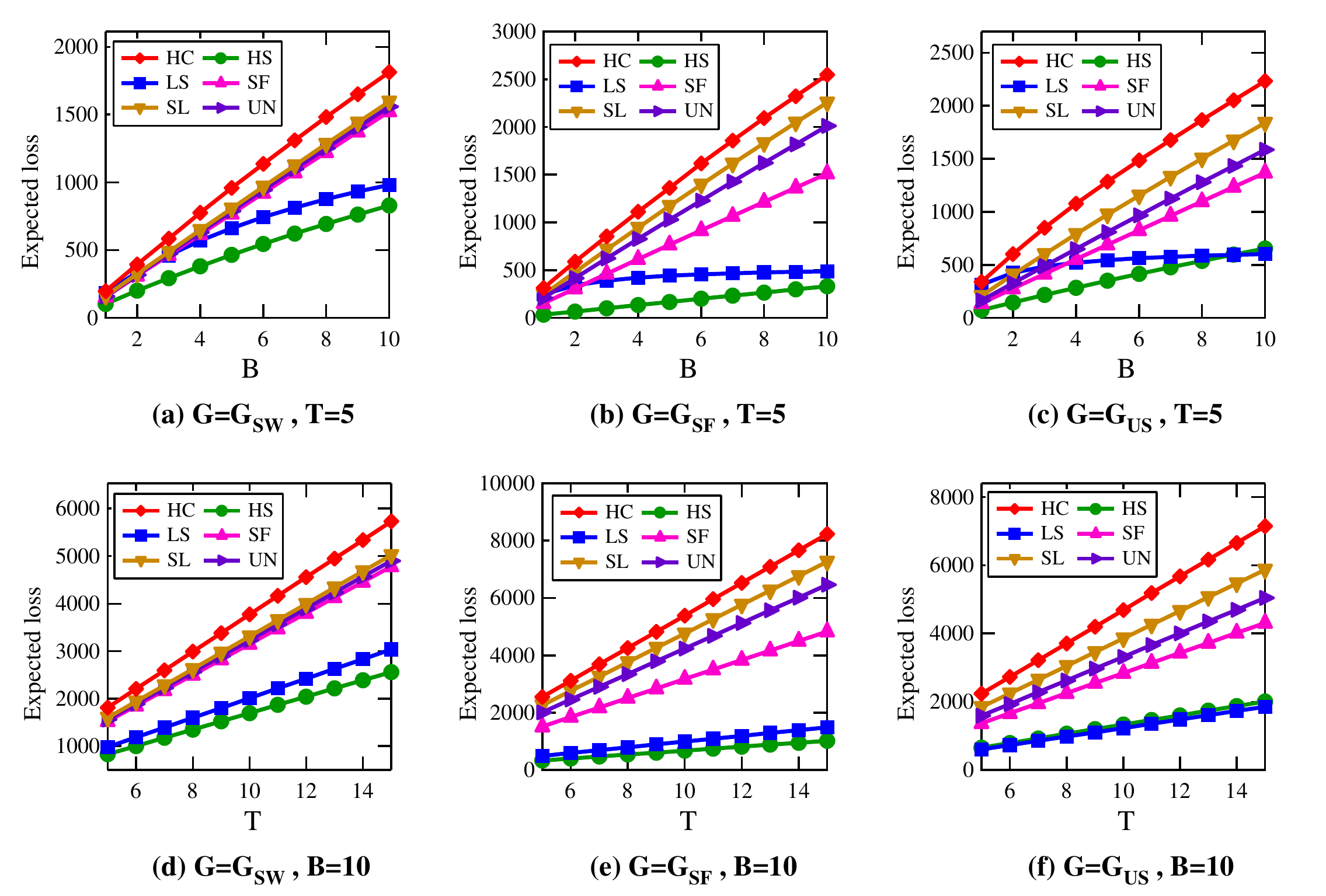}
	\caption{The results in Experiment 4.}
\end{figure}

Based on extensive experiments, we conclude that the HC strategy for each RA model is optimal, i.e., the HC risk is exactly the maximum possible expected loss. This implies that the HC strategy is the biggest threat to an organization, and the HC risk is an indicator of the organization's risk.

\section{Further discussions}

Consider a RA model $M_{RA} = (G, \alpha, \beta, \delta, \gamma, T, B)$. For an attack strategy $\mathbf{x}$, the expected cost benefit of the attacker is
\begin{equation}
\frac{L(\mathbf{x})}{BT} = \frac{1}{||\mathbf{x}||_1}\cdot \frac{1}{T} \int_0^T \sum_{i=1}^Nw_iC_i(t)dt.
\end{equation}

We refer to the expected cost benefit associated with the HC strategy as the attacker's \emph{HC cost benefit}. Based on the results given in the previous section, the HC cost benefit is much likely to be the highest cost benefit an attacker can achieve. Therefore, both the attacker and defender should be concerned with the influence of the attack budget per unit time and the attack duration on the HC cost benefit. This section examines these influences.

\subsection{The influence of the attack budget per unit time}

First, let us examine the influence of the attack budget per unit time on the HC cost benefit.

\begin{expe}
Consider a set of RA models $M_{RA} = (G, 1, 0.5, 1, 0.5, T, B)$, where $G \in \{G_{SW}, G_{SF}, G_{US}\}$, $T \in \{5, 10, 15\}$, and $B \in \{1, 2, \cdots, 10\}$. For each of the RA models, the HC cost benefit is plotted in Fig. 11. It is seen that the HC cost benefit drops with the attack budget per unit time.
\end{expe}

Extensive experiments exhibit similar phenomena. Hence, we conclude that the HC cost benefit always declines with the attack budget per unit time. Therefore, the power of APTs is limited in terms of the HC cost benefit. This sounds a good news for organizations under APTs.

\begin{figure}[H]
	\setlength{\abovecaptionskip}{0.cm}
	\setlength{\belowcaptionskip}{-0.cm}
	\centering
	\includegraphics[width=0.8\textwidth]{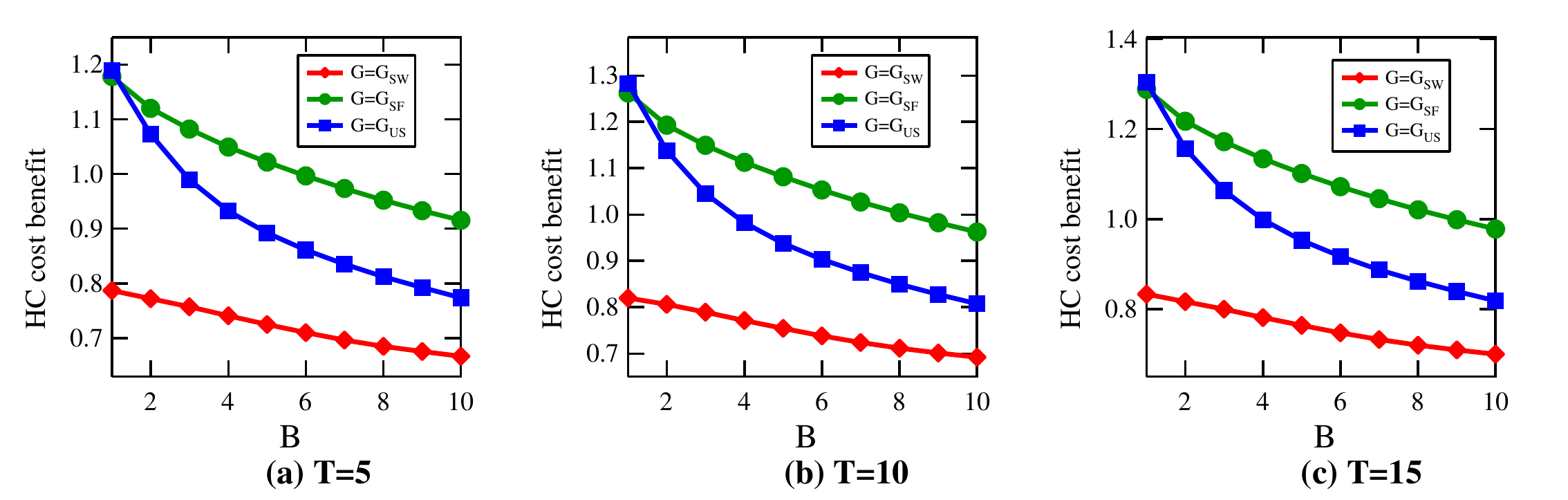}
	\caption{The results in Experiment 5.}
\end{figure}

\subsection{The influence of the attack duration}

Second, let us examine the influence of the attack duration on the HC cost benefit.

\begin{expe}
Consider a set of RA models  $(G, 1, 0.5, 0.5, 1, T, B)$, where $G \in \{G_{SW}, G_{SF}, G_{US}\}$, $B \in \{5, 10, 15\}$, and $T \in \{1, 2, \cdots, 10\}$. For each of the RA models, the HC cost benefit is plotted in Fig. 12. It is seen that, with the extension of the attack duration, the HC cost benefit goes up but flattens out quickly.
\end{expe}

Extensive experiments exhibit similar phenomena. This result demonstrates that, although a short-term APT can achieve a significant increment in HC cost benefit, this increment would become inappreciable with the prolonged attack duration. This conclusion is a good news for organizations, because the motive to conduct an extended APT campaign recedes.

\begin{figure}[H]
	\setlength{\abovecaptionskip}{0.cm}
	\setlength{\belowcaptionskip}{-0.cm}
	\centering
	\includegraphics[width=0.8\textwidth]{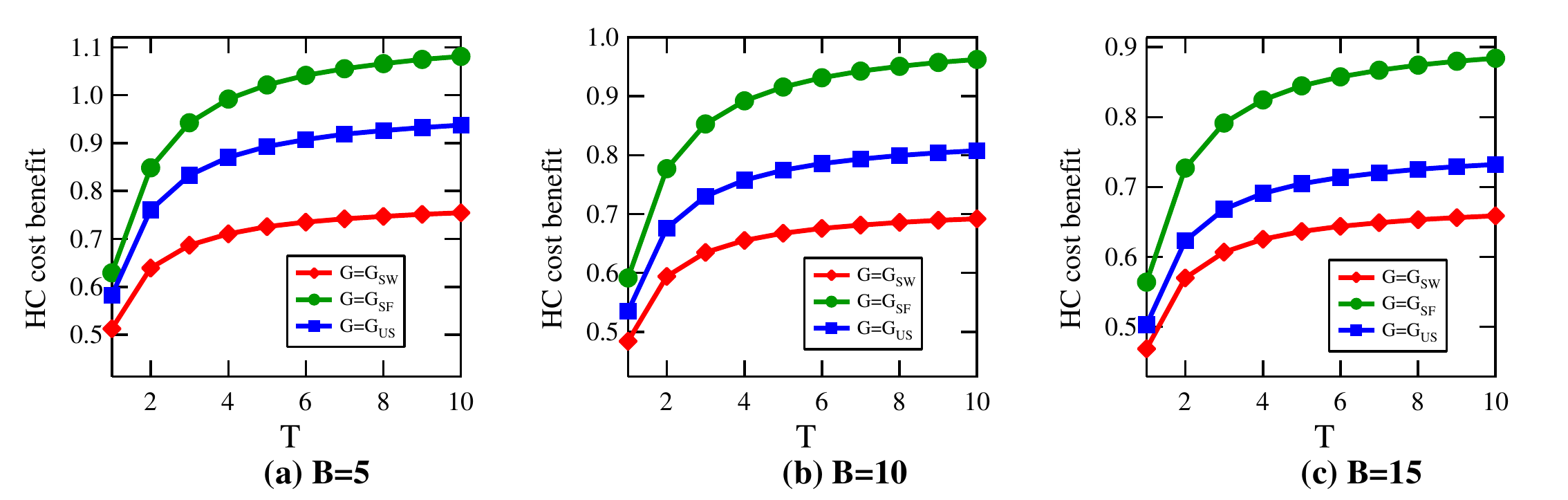}
	\caption{The results in Experiment 6.}
\end{figure}



%

\section{Concluding remarks}

This paper has dealt with the problem of assessing the risk of APTs. Based on a state evolution model of an organization, the risk of the organization is measured by its maximum expected loss, and the risk assessment problem is modeled as a constrained optimization problem. Our theoretical study expounds the way that different factors affect an organization's risk. We speculate from experiments that the attack strategy obtained by applying the hill-climbing method to any instance of the proposed optimization problem leads to the maximum expected loss. Comparative experiments support our conjecture. The impact of two factors on the attacker's cost profit is determined through computer simulations.

There are many open problems toward this direction. This work builds on the premise that the defense posture is fixed. To enhance the security of an organization, the cyber defender may well flexibly adjust the defense posture over time. In this context, the optimal control theory provides an appropriate framework for developing cost-effective defense strategies \cite{Khouzani2012a, Khouzani2012b, ChenPY2014, ChenPY2015, YangLX2016, ZhangTR2017}. In situations where the attacker and defender are both strategic, it is feasible to assess the risk of APTs in the framework of game theory \cite{Alpcan2011, Khouzani2012c, LiangXN2013, HuPF2015}. In this work, the network of an organization is assumed to be fixed. In reality, this network may well vary over time \cite{Schwarzkopf2010, Valdano2015, Karyotis2015, Sanatkar2016, Cho2016}. So, it is of importance to assess the risk of APTs in this context. The identification of propagation resources in complex networks is a hotspot of research in the field of cyber security \cite{JiangJJ2015, YangF2016, Manitz2017, JiangJJ2017}. We suggest to utilize the state evolution model established in this work to identify the footholds of the attacker in a network. Also, it is rewarding to extend this work to the more realistic scenarios where queuing networks are involved \cite{Karyotis2008}. In recent years, cloud computing has been extended to the edge of organizational networks, forming fog computing \cite{YiSH2015, Ivan2015, Alrawais2017, Khan2017, Roman2018}. In this context, the assessment of the risk of APTs must be a huge challenge.

\section*{Acknowledgments}

The authors are grateful to the anonymous reviewers and the editor for their valuable comments and suggestions, which have greatly improved the quality of the paper. This work is supported by National Natural Science Foundation of China (Grant No. 61572006) and Sci-Tech Support Program of China (Grant No. 2015BAF05B03).

%








\end{document}